\newcommand{\myparagraph}[1]{\smallskip\noindent {\bf #1.}}
\newcommand{\id}[1]{\ifmmode\mathit{#1}\else\textit{#1}\fi}
\newcommand{\const}[1]{\ifmmode\mbox{\textc{#1}}\else\textsc{#1}\fi}
\newcommand{\unionnoarg}[1]{\ensuremath{\textsc{Union}}}
\newcommand{\uniontext}[1]{union}
\newcommand{\nghheap}[1]{neighbor-heap}
\newcommand{\rescaledweight}{rescaled weight}
\newcommand{\ccobj}{\textsc{CC}}
\newcommand{\pcomplete}{\ensuremath{\mathsf{P}}-complete}
\newcommand{\slfrac}[2]{\left.#1\middle/#2\right.}
\newcommand{\whp}[1]{w.h.p.}
\algnewcommand\algorithmicparfor{\textbf{parfor}}
\algnewcommand\algorithmicpardo{\textbf{do}}
\algnewcommand\algorithmicendparfor{}%\textbf{end\ parfor}
\newtheorem*{rep@theorem}{\rep@title}
\newcommand{\newreptheorem}[2]{%
\newenvironment{rep#1}[1]{%
 \def\rep@title{#2 \ref{##1}}%
 \begin{rep@theorem}}%
 {\end{rep@theorem}}}
\newcommand{\intheapp}{appendix\xspace}
\definecolor{lightred}{RGB}{252, 172, 167}
\definecolor{lightblue}{RGB}{207, 233, 252}
\definecolor{lightgreen}{RGB}{176, 252, 167}
\newcommand{\algcolor}[2]{%
  \hskip-\ALG@thistlm\colorbox{#1}{\parbox{\dimexpr\linewidth-4\fboxsep}{\hskip\ALG@thistlm\relax #2}}%
}
\newcommand{\algemphblue}[1]{\algcolor{lightblue}{#1}}
\definecolor{change}{rgb}{0,0,0}
\definecolor{change2}{rgb}{0,0,0}
\begin{document}
\title{Scalable Community Detection via Parallel Correlation Clustering}

%%
%% The "author" command and its associated commands are used to define the authors and their affiliations.
\author{Jessica Shi}
\affiliation{%
  \institution{MIT}
}
\email{jeshi@mit.edu}

\author{Laxman Dhulipala}
\affiliation{%
  \institution{MIT}
}
\email{laxman@mit.edu}

\author{David Eisenstat}
\affiliation{%
  \institution{Google}
}
\email{eisen@google.com}

\author{Jakub {\L{}}{ą}cki}
\affiliation{%
  \institution{Google}
}
\email{jlacki@google.com}

\author{Vahab Mirrokni}
\affiliation{%
  \institution{Google}
}
\email{mirrokni@google.com}

\begin{abstract}
Graph clustering and community detection are central problems
in modern data mining. The increasing need for analyzing billion-scale data  calls for faster and more scalable algorithms for these problems.
There are certain trade-offs between the
quality and speed of such clustering algorithms. In this paper, we design scalable algorithms that achieve high quality when evaluated based
on ground truth. 

We develop a generalized sequential and shared-memory parallel framework based on the \textsc{LambdaCC} objective (introduced by Veldt et al.), which encompasses modularity and correlation clustering. Our framework consists of highly-optimized implementations that 
scale to large data sets of billions of edges and that obtain high-quality clusters compared to ground-truth data, on both unweighted and weighted graphs. Our empirical evaluation shows that this framework improves the state-of-the-art trade-offs between speed and quality of scalable community detection. For example, on a 30-core machine with two-way hyper-threading, our implementations achieve orders of magnitude speedups over other correlation clustering baselines, and up to 28.44x speedups over our own sequential baselines while maintaining or improving quality. 
\end{abstract}

\maketitle

\section{Introduction}
As a fundamental tool in modern data mining, graph clustering, or community detection, has a wide range of applications spanning data mining \cite{JeBaPoMuMa15}, social network analysis \cite{GaLuMiYo11}, bioinformatics \cite{JiWaXiWa09}, and machine learning \cite{JaMuFl99}, and has been well-studied under many frameworks \cite{Schaeffer07, AgWa10}. As the need to analyze larger and larger data sets increases, designing scalable algorithms that can handle graphs with billions of edges has become a central part of graph clustering.
A major challenge is to design algorithms that can achieve fast speed at high scale while retaining high quality as evaluated on data sets with ground truth. Many graph clustering algorithms have been proposed to address this challenge, and our goal is to develop a state-of-the-art algorithm from both speed and quality perspectives. In particular, we adopt a new \textsc{LambdaCC} framework, introduced by  Veldt \textit{et al.} \cite{VeGlWi18}, which provides a general objective encompassing modularity~\cite{GiNe02} and correlation clustering~\cite{BaBlCh04}. Veldt \textit{et al.} show that \textsc{LambdaCC} framework unifies several quality measures, including modularity, sparsest cut, cluster deletion, and a general version of correlation clustering. Modularity is a widely-used objective that is formally defined as the fraction of edges within clusters minus the expected fraction of edges within clusters, assuming random distribution of edges. The goal of correlation clustering is to maximize agreements or minimize disagreements, where agreements and disagreements are defined based on edge weights indicating similarity and dissimilarity. 

It is NP-hard to approximate modularity within a constant factor \cite{DiLiTh15}, so optimizing for modularity, and by extension optimizing for the \textsc{LambdaCC} objective, is inherently difficult. The most successful and widely-used modularity clustering implementations focus on heuristic algorithms, notably the popular Louvain method \cite{BlGuLaLe08}. Indeed, the Louvain method has been well-studied for use in modularity clustering, with highly optimized heuristics and parallelizations that allow them to scale to large real-world networks \cite{LuHaKa15,Traag15, networkit, TrWaVa19}.

In this paper, we design, implement, and evaluate a generalized sequential and shared-memory parallel framework for Louvain-based algorithms including modularity and correlation clustering. In particular, we optimize the \textsc{LambdaCC} objective with state-of-the-art empirical performance, scaling to graphs with billions of edges. We also show that there is an inherent bottleneck to efficiently parallelizing the Louvain method, in that the problem of obtaining a clustering matching that given by the Louvain method on the \textsc{LambdaCC} objective, is \pcomplete{}. As such, we explore heuristic optimizations and relaxations of the Louvain method, and demonstrate their quality and performance trade-offs for the \textsc{LambdaCC} objective.

As part of our comprehensive empirical study, we show that our sequential implementation is orders of magnitude faster than the proof-of-concept implementation of Veldt \textit{et al.} \cite{VeGlWi18}.
We note that for both \textsc{LabmdaCC} and correlation clustering objective, we are unaware of any existing implementation that would scale to even million-edge graphs and achieve comparable quality.
We further show that our parallel implementations obtain up to 28.44x speedups over our sequential baselines on a 30-core machine.

Moreover, we show that optimizing for the correlation clustering objective is of particular importance, by studying cluster quality with respect to ground truth data.
We observe that optimizing for correlation clustering yields higher quality clusters than the ones obtained by optimizing for the celebrated modularity objective.
In addition, we compare our implementation to two other prominent scalable algorithms for community detection: Tectonic~\cite{TsPaMi17} and SCD~\cite{SCD} and in both cases obtain favorable results, improving both the performance and quality.
Finally, even in the highly competitive and extensively studied area of optimizing for modularity, we obtain  an up to $3.5$x speedup over a highly optimized parallel shared-memory modularity clustering implementation in NetworKit \cite{networkit}.

Our code is available at \url{\vldbavailabilityurl}.

\myparagraph{Further related work}
Optimization for correlation clustering has been studied empirically in the case of complete graphs, which is equivalent to \textsc{LambdaCC} objective with resolution $\gamma = 0.5$~\cite{elsner2009bounding, PaPaOyReRaJo15}.
{\color{change} In this restricted setting, several scalable parallel implementations have been obtained based on the \textsc{KwikCluster} algorithm~\cite{ChDaKu14, PaPaOyReRaJo15, GaKuBoTs20}. We observe that the \textsc{KwikCluster} algorithm typically obtains a negative \textsc{LambdaCC} objective, which significantly limits its practical applicability.}

Scalable modularity clustering has been extensively studied both in the shared-memory~\cite{networkit, FaMoMa17, halappanavar2017scalable, LuHaKa15, FENDER20171793, 7307558} and distributed memory~\cite{SaAr18, 7161493, riedy2012scalable, 8425242} settings.
The two fastest implementations that we identify are NetworKit~\cite{networkit} and Grappolo~\cite{halappanavar2017scalable, 8425242}.
Both of them offer comparable performance, but we observed the NetworKit typically computes solutions with slightly larger objective, and thus we compare to NetworKit in our empirical evaluation.
We also note that compared to these papers, our algorithm optimizes for a more general \textsc{LambdaCC} objective.

\section{Preliminaries} \label{sec:prelim}
We consider undirected weighted graphs $G = (V, E, w)$, where $w : E \rightarrow \mathbb{R}$ denotes the weight of each edge, and undirected unweighted graphs $G = (V, E)$, where $w_{uv} = 1$ for all $(u, v) \in E(G)$. We let $n = |V|$ and $m = |E|$, and we use $d_v$ to denote the degree of vertex $v$.

We use a generalized correlation clustering objective that is equivalent to the \textsc{LambdaCC} objective given by Veldt \textit{et al.} \cite{VeGlWi18}. Note that under a specific set of parameters, our objective similarly reduces to the classic modularity objective. Moreover, our definition can be more generally applied to weighted graph inputs.

We fix a clustering resolution parameter $\lambda \in (0, 1)$. We define non-negative \emph{vertex weights} $k : V \rightarrow \mathbb{R}^+_0$, where unless otherwise specified, we take $k_v = 1$ for all $v \in V$ (a redefinition of $k$ is required for the modularity objective). We also define the \emph{\rescaledweight} $w'$ of each pair of vertices $(u, v) \in V \times V$ to be $w'_{uv} = 0$ if $u = v$, $w'_{uv} = w_{uv} - \lambda k_u k_v$ if $(u, v) \in E$, and $w'_{uv} =  - \lambda k_u k_v$ otherwise.

The goal is to maximize the \ccobj{} objective,
$\ccobj(x) = \allowbreak \sum_{(i, j) \in V \times V} \allowbreak w'_{ij} \cdot \allowbreak (1-x_{ij})$,
where $x = \{x_{ij}\}$ represents the distance between vertices $i$ and $j$ in a given clustering. Specifically, $x_{ij} = 0$ if $i$ and $j$ are in the same cluster, and $x_{ij} = 1$ if $i$ and $j$ are in different clusters.

The modularity objective can be obtained from the \ccobj{} objective by defining vertex weights $k$ and setting $\lambda$ appropriately. Note that Reichardt and Bornholdt \cite{ReBo06} defined a modularity objective with a fixed scaling parameter $\gamma \in (0, 1)$ to be
$Q(x) = \frac{1}{2m} \sum_{i \neq j} (A_{ij} - \gamma \frac{d_i d_j}{2m}) (1 - x_{ij})$,
where $A_{ij} = 1$ if $i$ and $j$ are adjacent, and $A_{ij} = 0$ otherwise. Setting $\gamma = 1$, this objective is equivalent to the simpler modularity objective given by Girvan and Newman \cite{GiNe02}.
To modify \ccobj{} to match the modularity objective, we set the node weights $k(v) = d_v$ for each $v \in V$, and we set the resolution $\lambda = \slfrac{\gamma}{(2m)}$. Maximizing the two objective functions is then equivalent.

\section{Algorithm and Optimizations}

\subsection{Sequential Louvain Method}\label{sec:alg:seq-louvain}
We begin by describing 
the classic sequential Louvain method from Blondel \textit{et al.} \cite{BlGuLaLe08}, \textsc{Sequential-CC}, adapted for the correlation clustering objective. We include the pseudocode in the \intheapp{}. The main idea is to repeatedly move vertices to  clusters that would maximize the objective, and once no vertices can be moved, compress clusters into vertices and repeat this process on the compressed graph.
In more detail, the algorithm takes as input a graph $G$ and node weights $k$, and begins with singleton clusters.
Then, it iterates over each vertex in a random order, and locally moves vertices to clusters that maximize the \ccobj{} objective.
The computation to determine the cluster that vertex $v$ should move to can be performed by maintaining the total vertex weight $K_c$ of each cluster $c$ in $C$. We provide the formula in the \intheapp{}.

After all vertices have been moved, the algorithm repeats this step of locally moving vertices until no vertices have performed non-trivial moves.
If no vertices changed clusters during this phase, then \textsc{Sequential-CC} terminates.
Otherwise, once a stable state has been achieved, the algorithm compresses the graph $G$ (using a subroutine \textsc{Sequential-Compress})
by creating a new graph $G'$ with vertex weights $k'$.
Each cluster $c$ in $G$ corresponds to a vertex in $G'$ with vertex weight $k'(c) = K_c$. Edges $(u, v)$ in $G$ are maintained as edges between the vertices corresponding to their clusters in $G'$, where multiple edges incident on the same vertices are combined into a single edge with weight equal to the sum of their weights.

Finally, the algorithm recurses on $G'$ and $k'$.
The algorithm takes the returned clustering $C'$ on the compressed graph $G'$, and composes it with the original clustering $C$ (using a subroutine \textsc{Sequential-Flatten}). 
It assigns the cluster of a vertex $v$ in $G$ to be the cluster of its corresponding vertex in the compressed graph $G'$, composing the clustering obtained in the recursion onto the original graph.

The main bottleneck in parallelizing \textsc{Sequential-CC} is the sequential dependencies in moving each vertex to the cluster that maximizes the objective, 
and we prove a related \pcomplete{}ness result about this bottleneck in the \intheapp{} (showing that the problem of obtaining a clustering equivalent to any clustering $C$ given by moving each vertex to its best cluster, is inherently sequential to solve under standard complexity-theory assumptions).

\begin{figure}[t!]
\begin{center}
\includegraphics[width=0.16\columnwidth,page=1]{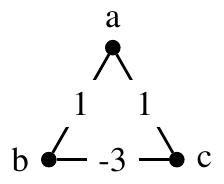}
\caption{An example where parallel local vertex moves lowers the total objective. Assume $\lambda = 0$ and initial clusters are singletons with objective 0. If $b$ and $c$ are both scheduled to move at the same time, they each choose cluster $\{a\}$, leading to a single cluster $\{a, b, c\}$ with objective -1.}
\label{fig:example-par-moves}
\end{center}
\end{figure}

As such, to obtain an empirically efficient implementation that achieves good parallelism, we heuristically relax the sequential dependency and allow vertices to move to clusters \emph{concurrently}. While vertices move to clusters that would individually maximize the objective, these moves in tandem may give a lower total objective; there is no guarantee of convergence. We show an example of this in Figure \ref{fig:example-par-moves}. Note that this is a common parallelization technique in Louvain methods for modularity clustering, and it has been observed that in practice this technique converges for the modularity objective \cite{networkit}.
We now discuss optimizations that can be used with this relaxation.

\subsection{Parallel Louvain Method and Optimizations} \label{sec:opt}

\begin{algorithm}[!t]
\footnotesize
\caption{Parallel Louvain method for correlation clustering}
 \begin{algorithmic}[1]
 \Procedure{Best-Moves}{$G$, $k$, $num\_iter$, $C$}
\State Define $\text{id}(c)$ to be the index of cluster $c \in C$
\State $D \leftarrow$ array of size $n$
\State $V' \leftarrow V$
\For{$i$ in range($num\_iter$)}
\ParFor{$v \in V'$} \label{pc:line-parfor-move}
\State $D[v] \leftarrow$ $\text{id}(c)$ such that moving $v$ to $c \in C$ would maximize \ccobj{($C$)} \label{pc:line-compute-move}
\State \algemphblue{Move $v$ to $D[v]$} \label{pc:line-async}
\EndParFor
\If{no moves were made in iteration $i$} break \EndIf
\State \algemphblue{$V' \leftarrow \{$ neighbors of $v \mid v$ moved$\}$  } \label{pc:line-move}
\EndFor
\State \Return $C$
 \EndProcedure
 \end{algorithmic}
\smallskip
\begin{algorithmic}[1]
\Procedure{Parallel-CC}{$G$, $k$, $num\_iter$}
\State $C \leftarrow$ singleton clusters for each $v \in V$
\State $C\leftarrow $ \textsc{Best-Moves}($G$, $k$, $num\_iter$, $C$)
\If{no moves were made in \textsc{Best-Moves}}
\State \Return $C$
\EndIf
\State $G', k' \leftarrow$ \textsc{Parallel-Compress}($G$, $C$)
\State $C' \leftarrow$ \textsc{Parallel-CC}($G'$, $k'$, $num\_iter$)
\State $C \leftarrow $ \textsc{Parallel-Flatten}($C$, $C'$)
\State \algemphblue{$C \leftarrow$ \textsc{Best-Moves}($G$, $k$, $num\_iter$, $C$)} \label{pc:line-refine}
\State \Return $C$
\EndProcedure
\end{algorithmic}
\label{alg:parallel-louvain}
\end{algorithm}

Algorithm \ref{alg:parallel-louvain} contains the pseudocode for each of the main optimizations that we consider in our parallelization of the Louvain method for the \ccobj{} objective. Each optimization is highlighted in blue, and we display in the pseudocode only the best settings that offer a reasonable trade-off between quality and performance, as we show in Section \ref{sec:exp:tuning}. We discuss in the following sections the other modular options available in our framework for each of these optimizations.

Our parallelization uses a natural heuristic relaxation of the sequential dependency in moving vertices to their desired clusters. A more faithful parallelization would fix a random permutation of $V$, and move in parallel the first $\ell$ vertices in order for the largest $\ell$ such that moving these $\ell$ vertices would not affect each other's objectives. However, compared to a heuristic relaxation, not only does this involve greater overhead due to the prefix computation of vertices that do not conflict, but it also respects sequential dependencies that may not affect later vertex moves. Thus, in the interest of performance, we consider the parallelization given in Algorithm \ref{alg:parallel-louvain}.

Note that the heuristic relaxation to allow vertices to move concurrently to their desired clusters is encapsulated in the subroutine \textsc{Best-Moves}. We include an additional parameter, $num\_iter$, which bounds the number of iterations in which we move each vertex to its desired cluster; this is necessary due to the lack of guarantee of convergence. Moreover, in order to allow each vertex to efficiently compute its best move, we maintain the total vertex weights $K_c$ of each cluster $c$, which is not shown in the pseudocode for simplicity.

{\color{change} Our main optimizations introduce symmetry breaking and work reduction techniques to improve performance while maintaining the objective. We also discuss a refinement step that improves the objective at the cost of running time and a higher memory overhead.}

\subsubsection{Optimization: Synchronous vs Asynchronous} \label{sec:opt:sync}
The first optimization involves scheduling individual vertex moves in \textsc{Best-Moves}, on Line \ref{pc:line-async}. We explore two options: \emph{synchronous} and \emph{asynchronous}. These options have been previously studied in parallelizing the Louvain method for the modularity objective \cite{networkit, SaAr18}.

In the synchronous setting, instead of moving vertices on Line \ref{pc:line-async} immediately after the computation of their desired cluster, we move all vertices in parallel to their desired cluster $D[v]$ after the parallel for loop on Line \ref{pc:line-parfor-move}. This can be efficiently performed in parallel by aggregating vertices that move from the same clusters and vertices that move to the same clusters.

In the asynchronous setting, we perform vertex moves on Line \ref{pc:line-async} as highlighted in blue. Note that moving a vertex $v$ in this manner potentially interferes with the computation that each vertex performs on Line \ref{pc:line-compute-move}, where other vertices' computations of their desired cluster may depend on $v$'s current cluster, and the total vertex weight of $v$'s prior and $v$'s new cluster. Instead of using locks or other methods of synchronization, we relax consistency guarantees, and we perform separate atomic operations to update the cluster that $v$ moves to and to update the total vertex weight of the cluster that $v$ moves to. Thus, there is no guarantee that the stored total vertex weights of clusters represent the actual total vertex weights of the clusters.

We show in Section \ref{sec:exp:tuning} that perhaps surprisingly, the asynchronous setting outperforms the synchronous setting, particularly in terms of objective. {\color{change2} Of the optimizations, the asynchronous optimization contributes most significantly towards an improvement in objective.} This is because the asynchronous setting allows for symmetry breaking, whereas in the synchronous setting, vertices that are attempting to move away from each other may inadvertently move to the same cluster, since they must move in lockstep. For certain graphs, this symmetry breaking also allows the asynchronous setting to outperform the synchronous setting in terms of running time, due to fewer vertex moves required to obtain the maximal objective.

Previous uses of the Louvain method for other objectives explored different schedules for vertex moves, which give more granular trade-offs~\cite{LuHaKa15, BaHaWeRoHo17}. We found that our asynchronous setting outperforms methods that maintain consistency guarantees in quality and speed.

\subsubsection{Optimization: All Vertices vs Neighbors of Clusters vs Neighbors of Vertices}
We now consider optimizations that reduce the set of vertices to consider moving in every iteration of \textsc{Best-Moves}. When considering vertices on Lines \ref{pc:line-parfor-move} -- \ref{pc:line-compute-move}, we note that following a set of vertex moves in the previous iteration, we can reduce the number of vertices that would be likely to be induced to change clusters by the vertex moves in the previous iteration. This idea has been previously used in work on the Louvain method for the modularity objective \cite{OzTeIn16, BaHaWeRoHo17}. Importantly, the \textsc{Best-Moves} subroutine takes a significant portion of total clustering time, and reducing the subset of vertices to consider offers significant performance improvements.

In more detail, isolating a vertex $v$ which has in the previous iteration moved from cluster $c$ to cluster $c'$, the vertices that would be affected by this move in the next iteration belong to three categories: (a) neighbors of $v$, {\color{change} (b) neighbors of any vertex in $c$ , and (c) vertices in $c'$.} Any vertex $u$ that is neither a neighbor of $v$ nor a neighbor of $c$, and that is not in $c'$, is not induced to move clusters due to $v$'s move. This is due to the change in objective formula.
{\color{change} For conciseness in describing category (b), we formally define \emph{neighbors of clusters} $C$ to be the union of the neighbors of each vertex in each cluster in $C$.}

In our algorithm, we consider three options for this optimization: restricting considered vertices to \emph{neighbors of vertices} moved in the previous iteration, restricting considered vertices to \emph{neighbors of clusters} that vertices have moved to in the previous iteration, and considering \emph{all vertices} in each iteration. 
The first option corresponds to the update on $V'$ in Line \ref{pc:line-move}, highlighted in blue. The second option would instead replace this line with setting $V' \leftarrow \{$ neighbors of the current cluster $C[i] \mid v$ moved from cluster $C[i]$ to cluster $c\}$, and the final option would set $V' \leftarrow V$.

We show in Section \ref{sec:exp:tuning} that restricting $V'$ to the set of neighbors of vertices that have moved in the previous iteration outperforms both other options while maintaining comparable objective. This is because the vertices that are most affected by moving vertices in terms of objective are neighbors of the moving vertices, and thus most of the objective obtained is from considering these neighbors.
While there may be contrived scenarios in which more objective is affected due to non-neighbors of moving vertices with sufficiently large edge weights \footnote{\color{change2}For instance, given a large enough star graph where each leaf has a small enough positive edge weight to the center, following the first set of moves in which every leaf clusters with the center.}, we do not see these scenarios in practice, and 
considering a smaller subset of vertices in each iteration allows for less total work to be performed, since we save on the cost of computing best moves for other vertices. The performance improvements outweigh the marginal loss in objective from these cases.

\subsubsection{Optimization: Multi-level Refinement} 
Finally, we consider a popular multi-level refinement optimization \cite{RoNo11, networkit}. Note that the first phase of our parallel algorithm and the classic Louvain method involves what can be viewed as successive coarsening steps, in which we perform best vertex moves and compress the resulting clustering into a coarsened graph; {\color{change} vertices in the coarsened graph correspond to clusters, or sets of vertices, in the original graph. For instance, each vertex $v$ in the original graph is clustered into a cluster $C$, which corresponds to the vertex $v'$ in the coarsened graph.
We then recurse on the coarsened graph. Following the recursion, we receive a clustering on the coarsened graph, which we must translate to a clustering on the original graph. Importantly, each vertex $v'$ in the coarsened graph now belongs to a cluster $C'$ in the coarsened graph, and we must now assign the cluster of the original vertex $v$. We use a flattening procedure for this, where we simply assign each vertex $v$ in the original graph to the corresponding cluster $C'$.}

{\color{change} However, note that $v$ did not have an opportunity to move clusters individually in successive recursive steps, and importantly, because there is no guarantee of convergence, clusters may not reach a steady state prior to compressing the graph. $v$ may have ended up in a sub-optimal cluster $C$ when the coarsening was performed and would have been unable to change clusters after the coarsening.
Now, given its new cluster $C'$, $v$ may desire to change clusters.}
{\color{change} The \emph{multi-level refinement} optimization allows for $v$ to move by performing a refinement step after each flattening step, as we traverse back up the recursive hierarchy.  We simply perform a further iteration of \textsc{Best-Moves} on each individual vertex $v$, prior to returning the clustering.}

This refinement optimization is shown on Line \ref{pc:line-refine}, highlighted in blue. Omitting this line removes the optimization. The optimization increases the space usage and the amount of time required for our implementation, since it requires each compressed graph to be maintained throughout and since it adds an additional subroutine, but it non-trivially improves quality, as we show in Section \ref{sec:exp:tuning}. {\color{change} This is precisely due to the lack of guarantee of convergence in Algorithm \ref{alg:parallel-louvain}, where vertices may be coarsened non-optimally. The refinement step allows for these vertices to move to better clusters as we traverse back up the recursive hierarchy, resulting in better overall quality.}

\subsubsection{Other Optimizations}\label{sec:opt:other-opt}
We discuss in the \intheapp{} other practical optimizations that we use, including our efficient parallelization of the subroutines \textsc{Sequential-Compress} and \textsc{Sequential-Flatten}.

\section{Experiments}
In this section, we present a comprehensive evaluation of our algorithms, demonstrating significant speedups over state-of-the-art implementations and high-quality clusters compared to ground truth.

{\color{change} We show that optimizations that address symmetry breaking and work reduction result in an overall faster implementation while maintaining objective, and additional refinement steps improve the objective at the cost of performance and memory usage. We also demonstrate significant speedups over state-of-the-art implementations due to our theoretically efficient parallelization of key subroutines, while obtaining high quality compared to ground truth.}

\myparagraph{Environment}
We run most experiments on a c2-standard-60 Google Cloud instance, with 30 cores (with two-way hyper-threading), 3.8GHz Intel Xeon Scalable processors, and 240 GiB main memory. For experiments on large graphs we use a m1-megamem-96 Google Cloud instance, with 48 cores (with two-way hyper-threading), 2.7GHz Intel Xeon Scalable processors, and 1434 GiB main memory. We compile our programs with g++ (version 7.3.1) and the \texttt{-O3} flag, and we use an efficient work-stealing scheduler, {\color{change} which, as shown in~\cite{BlAnDh20}, provides on average a 1.43x speedup over Intel's Parallel STL library.} We also terminate any experiment that takes over 7 hours.

\myparagraph{Graph Inputs}
We test our implementations on real-world undirected graphs from the Stanford Network Analysis Project (SNAP) \cite{SNAP}, namely com-dblp, com-amazon, com-livejournal, com-orkut, and com-friendster. We also use twitter, a symmetrized version of the Twitter graph representing follower-following relationships \cite{Kwak2010}. Details of these graphs are shown in the \intheapp{}.
To show the clustering quality of our implementations, we compare with the top 5000 ground-truth communities given by SNAP. These communities may overlap, so to compute average precision and recall, for each ground-truth community $c$, we match $c$ to the cluster $c'$ with the largest intersection to $c$.\footnote{Any given cluster $c'$ may be matched to multiple or no ground-truth communities $c$.} This matches the methodology used by Tsourakakis \textit{et al.} in evaluating \textsc{Tectonic} \cite{TsPaMi17}.

We also use an approximate $k$-NN algorithm~\cite{scann} to construct weighted graphs from pointset data, from the UCI Machine Learning repository \cite{DuGr17}. We defer a discussion of our weighted graph data to the \intheapp{}.
Additionally, we demonstrate scalability using synthetic graphs generated by the standard rMAT graph generator~\cite{ChakrabartiZF04}, with $a=0.5$, $b=c=0.1,$ and $d=0.3$.

All experiments are run on the c2-standard-60 instances, except for experiments on the twitter and friendster graphs, which are run on the m1-megamem-96 instances due to the higher memory requirement, particularly when using multi-level refinement.

\myparagraph{Implementations} We test the Louvain-based implementations of our sequential and parallel correlation clustering algorithms (\textsc{Seq-CC} and \textsc{Par-CC} respectively). We also redefine vertex weights and $\lambda$ as discussed in Section \ref{sec:prelim} to obtain modularity clustering implementations (\textsc{Seq-Mod} and \textsc{Par-Mod}). For our parallel implementations, we use $num\_iter = 10$ unless otherwise specified. For our sequential implementations, we use the superscript ${}^\textsc{con}$ if we run to convergence (without restricting the number of iterations), and we use no superscript if we use $num\_iter = 10$.
We run each experiment 10 times and report the average time and objective. \footnote{The average objective is non-deterministic when using the asynchronous setting from Section \ref{sec:opt:sync}.} 

We compare to two correlation clustering implementations, namely the parallel \textsc{C4} and \textsc{ClusterWild!} by Pan \textit{et al.} \cite{PaPaOyReRaJo15} (based on the sequential correlation clustering algorithm \textsc{KwikCluster} \cite{NiChNe08}), and the sequential Louvain-based implementation in the correlation clustering framework \textsc{LambdaCC}, by Veldt \textit{et al.} \cite{VeGlWi18}.

We note that there is a rich body of work on pivot-based correlation clustering algorithms, both parallel and sequential, including prior work by Chierichetti \textit{et al.} \cite{ChDaKu14} and by Garc\'{\i}a-Soriano \textit{et al.} \cite{GaKuBoTs20}. These works are based on \textsc{KwikCluster} (also known as \textsc{Pivot}), and offer faster performance with matching or worse approximation guarantees. However, in our comparison to \textsc{C4}, which parallelizes \textsc{KwikCluster} and which matches \textsc{KwikCluster}'s approximation guarantee, we note that while \textsc{C4} is much faster than \textsc{Par-CC}, the quality is poor compared to \textsc{Par-CC} {\color{change2} in terms of both the \ccobj{} objective and comparison to ground-truth communities, resulting in clusters of vertices with proportionally lower similarities to each other}. Thus, we omit pivot-based work that offer the same or worse quality guarantees compared to \textsc{C4}.

We also compare to two state-of-the-art community detection algorithms which are based on triangle counts: the sequential \textsc{Tectonic} by Tsourakakis \textit{et al.} \cite{TsPaMi17}, and the shared-memory parallel \textsc{SCD}, by Prat-P\'{e}rez \textit{et al.} \cite{SCD}.
Both algorithms were shown to deliver superior quality to multiple similarly scalable baseline methods.
In the special case of modularity, we compare to the parallel Louvian-based modularity clustering implementation in the NetworKit toolkit (\textsc{networkit}), by Staudt and Meyerhenke \cite{networkit}.

\begin{figure}[t!]
\begin{center}
\begin{subfigure}[t]{\columnwidth}
\centering
\includegraphics[width=\textwidth,page=15]{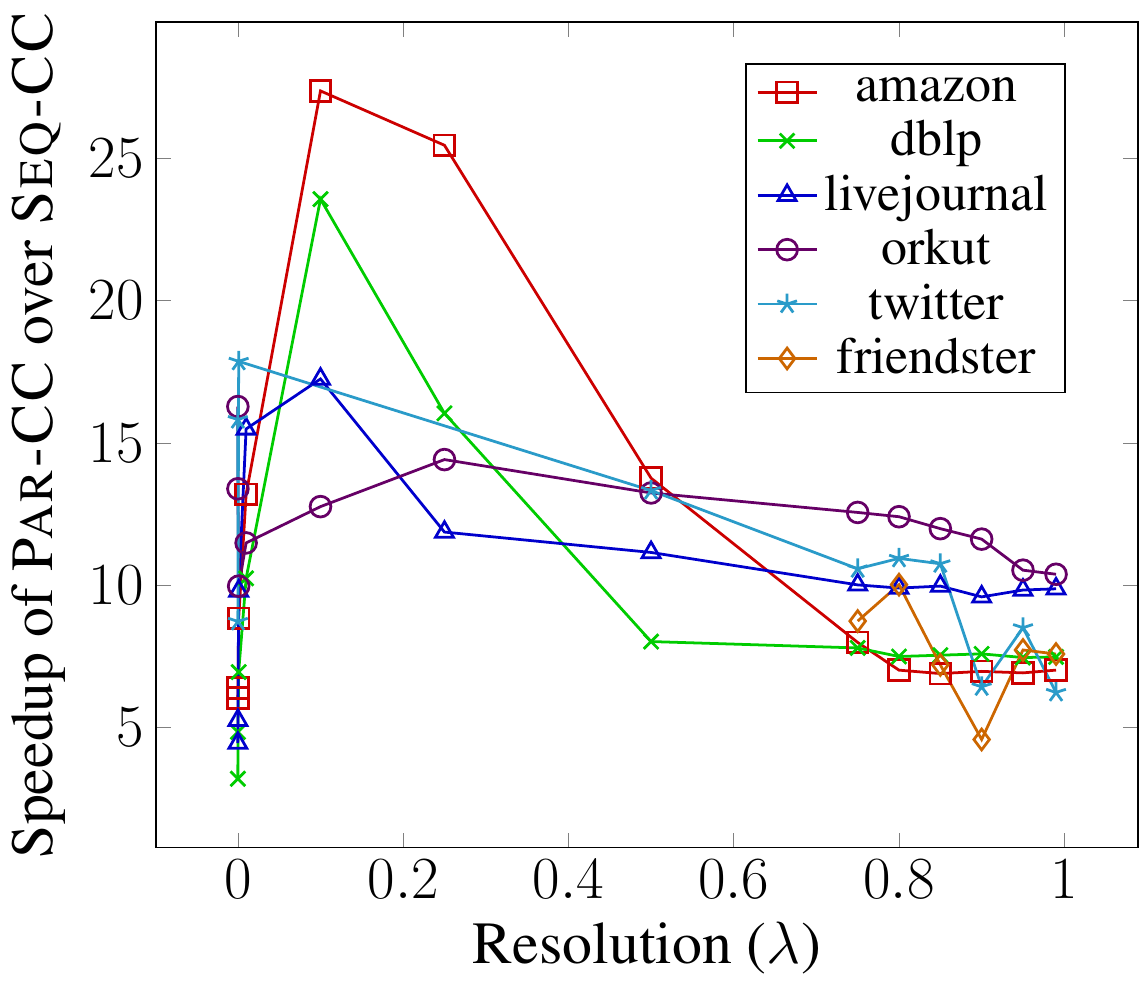}
\end{subfigure}
\hfill
\begin{subfigure}[t]{\columnwidth}
\includegraphics[width=\textwidth,page=1]{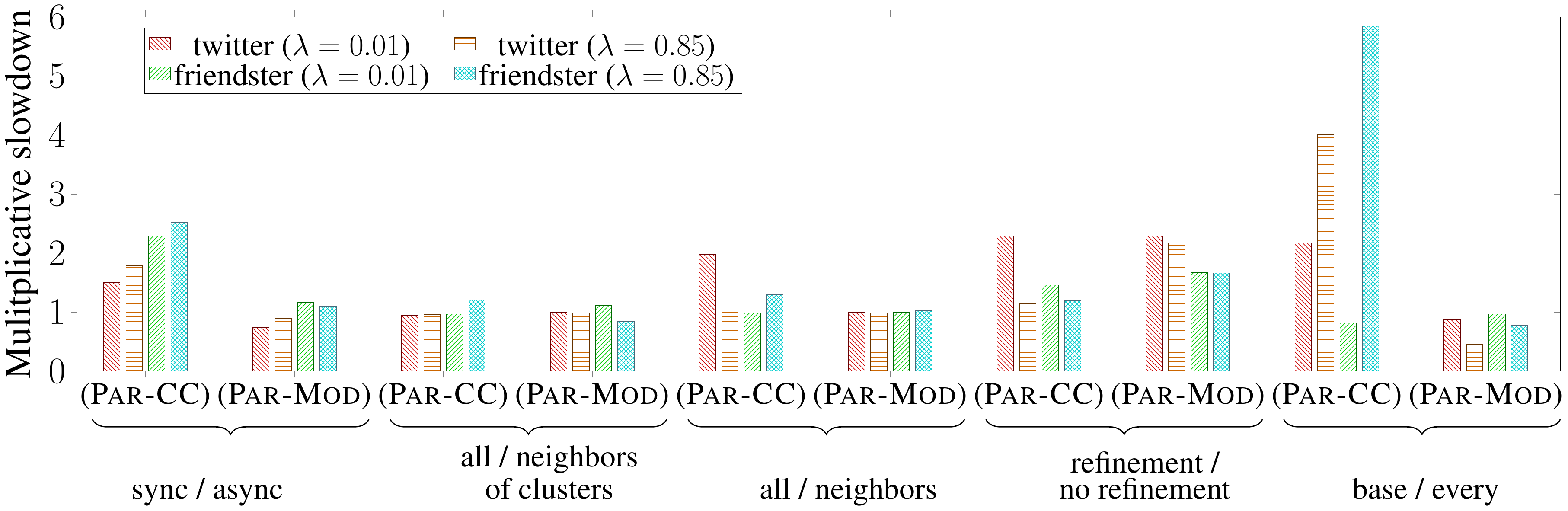}
\end{subfigure}
\caption{{\color{change} Multiplicative slowdown in average time of each optimization. Also shown is the slowdown for no optimizations (base) over every optimization. Running times were obtained for \textsc{Par-CC} and \textsc{Par-Mod} on the graphs amazon, orkut, twitter, and friendster, with $\lambda = 0.01, 0.85$.} }
\label{fig:exp1-time}
\end{center}
\end{figure}

\begin{figure}[t!]
\centering
\begin{subfigure}[t]{0.45\textwidth}
\centering
\includegraphics[width=\textwidth,page=5]{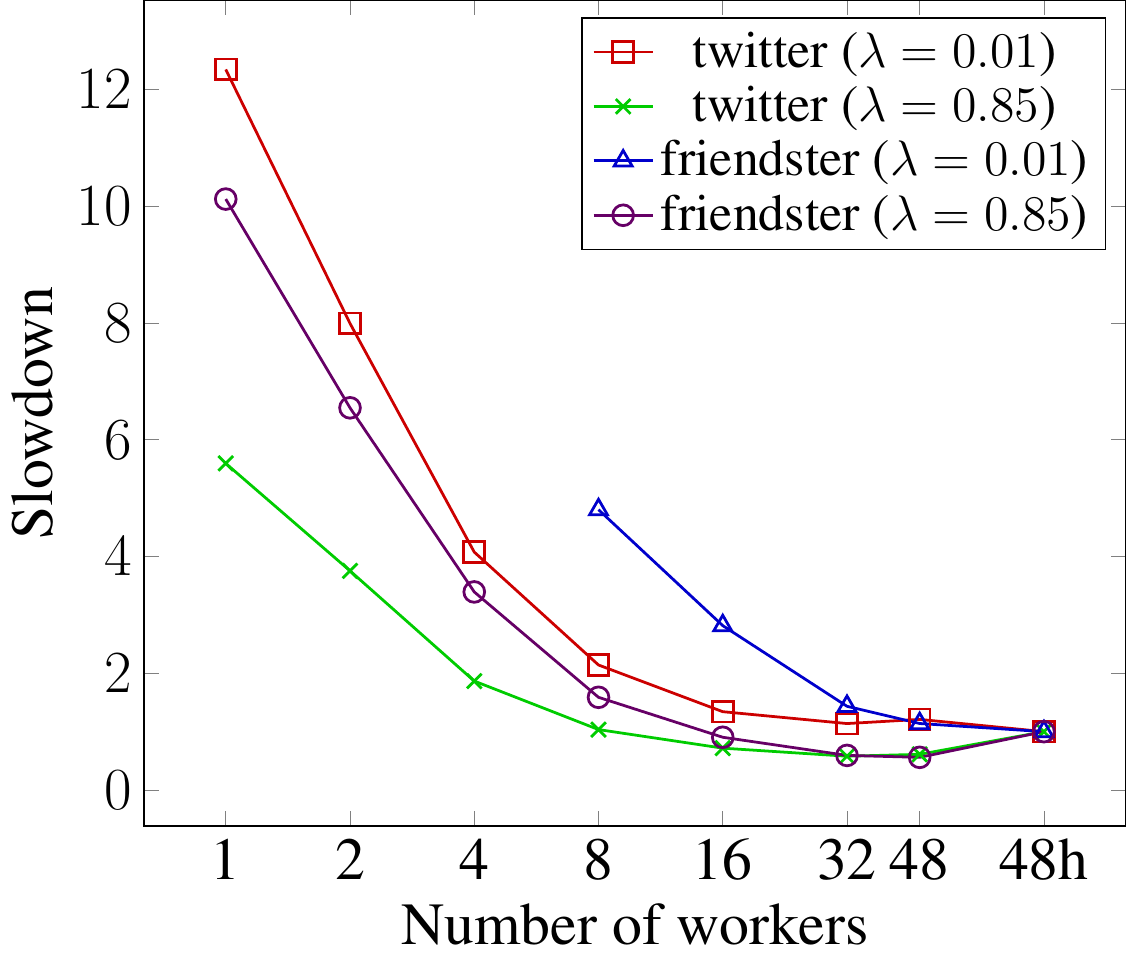}
\end{subfigure}
\hfill
\begin{subfigure}[t]{0.35\textwidth}
\centering
\includegraphics[width=\textwidth,page=6]{figures/fig-new2.pdf}
\end{subfigure}
\caption{{\color{change} \ccobj{} objective for \textsc{Par-CC} on a symmetric log scale\protect\footnotemark (top) and multiplicative increase in the modularity for \textsc{Par-Mod} over no optimizations (bottom), of each optimization and every optimization. Objectives were obtained on amazon, orkut, twitter, and friendster, with $\lambda = 0.01, 0.85$.}}
\label{fig:exp1-obj}
\end{figure}
\footnotetext{{\color{change} We take the symmetric log of $x$ to be $\text{sign}(x) \cdot\log{|x|}$. We use a symmetric log scale to more accurately depict the \ccobj{} objectives, because the objectives are very large positive and negative numbers.}}

\subsection{Tuning Optimizations} \label{sec:exp:tuning}
We evaluated the effectiveness of the different optimizations discussed in Section \ref{sec:opt}, namely, considering \emph{synchronous} versus \emph{asynchronous} vertex moves, considering \emph{all vertices} versus \emph{neighbors of clusters} that vertices have moved to versus \emph{neighbors of vertices} that have moved as the vertex subset $V'$ to iterate over, and considering \emph{multi-level refinement} versus \emph{no refinement}. We establish here that the optimizations that offer reasonable trade-offs between speed and quality are asynchronous vertex moves, considering neighbors of vertices that have moved as $V'$, and using multi-level refinement.

{\color{change} We tuned these optimizations on the graphs amazon, orkut, twitter, and friendster, with $\lambda = 0.01, 0.85$. Note that lower resolutions produce a clustering with fewer clusters, and higher resolutions produce a clustering with more clusters; these resolutions effectively model both scenarios, where differences in the number of clusters produced may affect performance. We fix the synchronous, all vertex moves, and no refinement options, 
and give running times and objectives considering turning on a single optimization at a time; these are the natural settings that do not optimize the basic sequential Louvain method. 
Considering both \textsc{Par-CC} and \textsc{Par-Mod}, Figure \ref{fig:exp1-time} shows the multiplicative slowdowns of synchronous over asynchronous, all vertices over neighbors of clusters, all vertices over neighbors of vertices, multi-level refinement over no refinement (note that multi-level refinement causes slowdowns, but improves quality over the basic no refinement option), and no optimizations over every optimization.
Figure \ref{fig:exp1-obj} shows the objectives for these optimizations.}

We see speedups of up to 2.50x, {\color{change2}with a median of 1.21x,} using the asynchronous over the synchronous setting across \textsc{Par-CC} and \textsc{Par-Mod}. For \textsc{Par-CC}, the synchronous setting often produces negative objective, whereas in the asynchronous setting, the objective is always positive, and we see a 1.29--156.01\% increase in objective. The objective in the synchronous setting is negative likely due to the phenomenon shown in Figure \ref{fig:example-par-moves}, which is more likely to appear for large resolutions due to the objective computation. {\color{change2} This phenomenon has additionally been discussed in prior work in relation to the modularity objective \cite{RaAlKu07, networkit}.}
For \textsc{Par-Mod}, we see a 0.0015 -- 5.84\% increase in modularity using the asynchronous setting over the synchronous setting. The synchronous setting often leads to very poor objectives due to a lack of symmetry breaking, compared to the asynchronous setting where there is inherent randomness.

The asynchronous setting is also often faster than the synchronous setting, because due to the symmetry breaking, fewer vertices end up making moves that decrease the objective.
For \textsc{Par-Mod} on orkut and twitter, the asynchronous setting is not faster than the synchronous setting, but the increase in modularity using the asynchronous setting is more significant, compared to that obtained on other graphs. Up to 1.43x more time is spent computing best moves in the asynchronous setting compared to the synchronous setting, due to an increased number of vertex moves required to obtain the higher objective.
Overall, considering tradeoffs between objective and speed, the best setting in general is the asynchronous setting.

 We see up to a 1.98x speedup, {\color{change2}with a median of 1.03x,} considering neighbors of vertices compared to all vertices as the subset $V'$, and we see up to a 1.32x speedup, {\color{change2}with a median of 1.01x,} considering neighbors of clusters compared to all vertices. Moreover, the objectives obtained in all settings are comparable. This is because neighbors of previously moved vertices, and by extension neighbors of clusters of previously moved vertices, are most significantly affected in terms of objective by previously moved vertices, based on the change in objective formula. In the cases where the speedup is minimal, vertices in these classes represent a larger proportion of all vertices, which we show in the \intheapp{}; however, due to the cases with more significant speedups, the best setting in general is considering neighbors of vertices as the subset $V'$.

Finally, we see slowdowns of up to 2.29x, {\color{change2}with a median of 1.67x,} using multi-level refinement, compared to using no refinement. However, using multi-level refinement, we see 1.12 -- 36.92\% increase in the \ccobj{} objective, and up to a 6.41\% increase in modularity. Refinement improves objective because it allows vertices to move to better clusters following the compression steps, increasing the objective in situations where compression was not optimal. The increase in time is due to the added work in refinement, and in general, the best setting is to use refinement. {\color{change2} Note that these results mirror prior work applying multi-level refinement for the modularity objective \cite{RoNo11, networkit}.} For the modularity objective using small resolutions, the increase in objective is minimal; this is because in these cases, the objective obtained without using refinement is already very high (on average 0.99, where the maximum is 1.00).

In the remaining experiments, we fix the asynchronous setting, using neighbors of vertices, and using multi-level refinement, as the overall optimal settings, although we note that for small resolutions, multi-level refinement often offers little increase in objective considering the increase in running time. Overall, using all optimizations, we see up to a 5.85x speedup in running time and up to a 156.01\% increase in objective. For \textsc{Par-Mod}, there are scenarios with up to a 2.20x slowdown in running time due to contention in the asynchronous setting compared to the synchronous setting, but in these cases, we see significant increases in modularity, of 2.93\% -- 8.09\%.

\subsection{Speedups and Scalability}

\begin{figure}[t!]
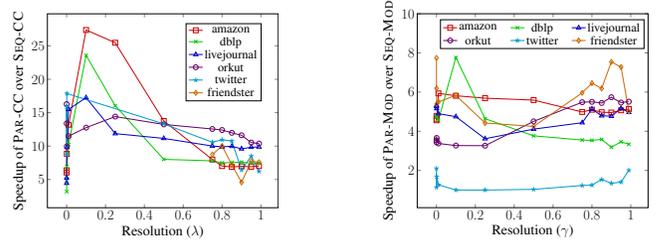

\centering
\begin{subfigure}[t]{0.2\textwidth}
\centering
\includegraphics[width=\textwidth,page=1]{figures/fig-new.pdf}
\end{subfigure}
\hfill
\begin{subfigure}[t]{0.2\textwidth}
\centering
\includegraphics[width=\textwidth,page=2]{figures/fig-new.pdf}
\end{subfigure}
\caption{{\color{change} Speedup of \textsc{Par-CC} over \textsc{Seq-CC} (left) and of \textsc{Par-Mod} over \textsc{Seq-Mod} (right), on amazon, dblp, livejournal, orkut, twitter, and friendster, for varying resolutions. \textsc{Seq-CC} timed out on twitter for $\lambda = 0.01$, $0.1$, and $0.25$, and on friendster for $\lambda < 0.75$. }}
\label{fig:exp5-trunc}
\end{figure}

\begin{figure}[t!]
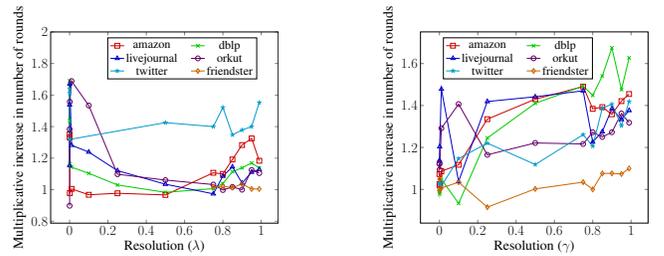

\centering
\begin{subfigure}[t]{0.2\textwidth}
\centering
\includegraphics[width=\textwidth,page=3]{figures/fig-new2.pdf}
\end{subfigure}
\hfill
\begin{subfigure}[t]{0.2\textwidth}
\centering
\includegraphics[width=\textwidth,page=4]{figures/fig-new2.pdf}
\end{subfigure}
\caption{{\color{change} Multiplicative increase in the number of rounds until convergence or until the default number of iterations is reached, of \textsc{Par-CC} over \textsc{Seq-CC} (left) and of \textsc{Par-Mod} over \textsc{Seq-Mod} (right), on amazon, dblp, livejournal, orkut, twitter, and friendster, for varying resolutions. \textsc{Seq-CC} timed out on twitter for $\lambda = 0.01$, $0.1$, and $0.25$, and on friendster for $\lambda < 0.75$. }}
\label{fig:exp5-numiter}
\end{figure}

\myparagraph{Speedups} We first note that there exist no prior scalable correlation clustering baselines that offer high quality in terms of objective. The existing implementations are the parallel \textsc{C4} and \textsc{ClusterWild!} \cite{PaPaOyReRaJo15}, which are based on a maximal independent set algorithm, and the sequential Louvain-based method in \textsc{LambdaCC} \cite{VeGlWi18}. Our implementations significantly outperform these baselines, and we present a detailed comparison in the \intheapp{}. Notably, \textsc{C4} and \textsc{ClusterWild!} offer significant speedups of up to 428.64x over \textsc{Par-CC}, but 
achieve poor and often negative objective, with a decrease in the objective of 273.35 -- 433.31\% over \textsc{Par-CC}. {\color{change2} \textsc{C4} and \textsc{ClusterWild!} also achieve poor precision and recall compared to ground truth communities, with precision between 0.44 -- 0.65 and corresponding recall between 0.10 -- 0.15. In comparison, on the same graphs, \textsc{Par-CC} achieves recall between 0.61 -- 0.98 for precision greater than 0.50.}
Furthermore, \textsc{LambdaCC} is a MATLAB implementation that uses an adjacency matrix to represent the graph, and cannot scale to graphs of more than hundreds of vertices. 
Thus, we demonstrate the speedups of our parallel implementations primarily against our own sequential implementations, which include the applicable optimizations discussed in Section \ref{sec:exp:tuning}, namely considering only neighbors of vertices when computing best local moves, and multi-level refinement.

Figure \ref{fig:exp5-trunc} shows the speedup of \textsc{Par-CC} and \textsc{Par-Mod} over \textsc{Seq-CC} and \textsc{Seq-Mod} respectively.
We also compared to \textsc{Seq-CC}${}^\textsc{con}$ and \textsc{Seq-Mod}${}^\textsc{con}$; we note that running to convergence generally increases the running time while improving the objective, although the improvements are not always significant. However, as we show later in Section \ref{sec:exp:quality:uw}, the average precision-recall of \textsc{Seq-CC} is significantly worse than that of \textsc{Seq-CC}${}^\textsc{con}$, while our \textsc{Par-CC} matches the average precision-recall of \textsc{Seq-CC}${}^\textsc{con}$.

\begin{figure}[t!]
\centering
\begin{subfigure}[t]{0.22\textwidth}
\centering
\includegraphics[width=\textwidth,page=13]{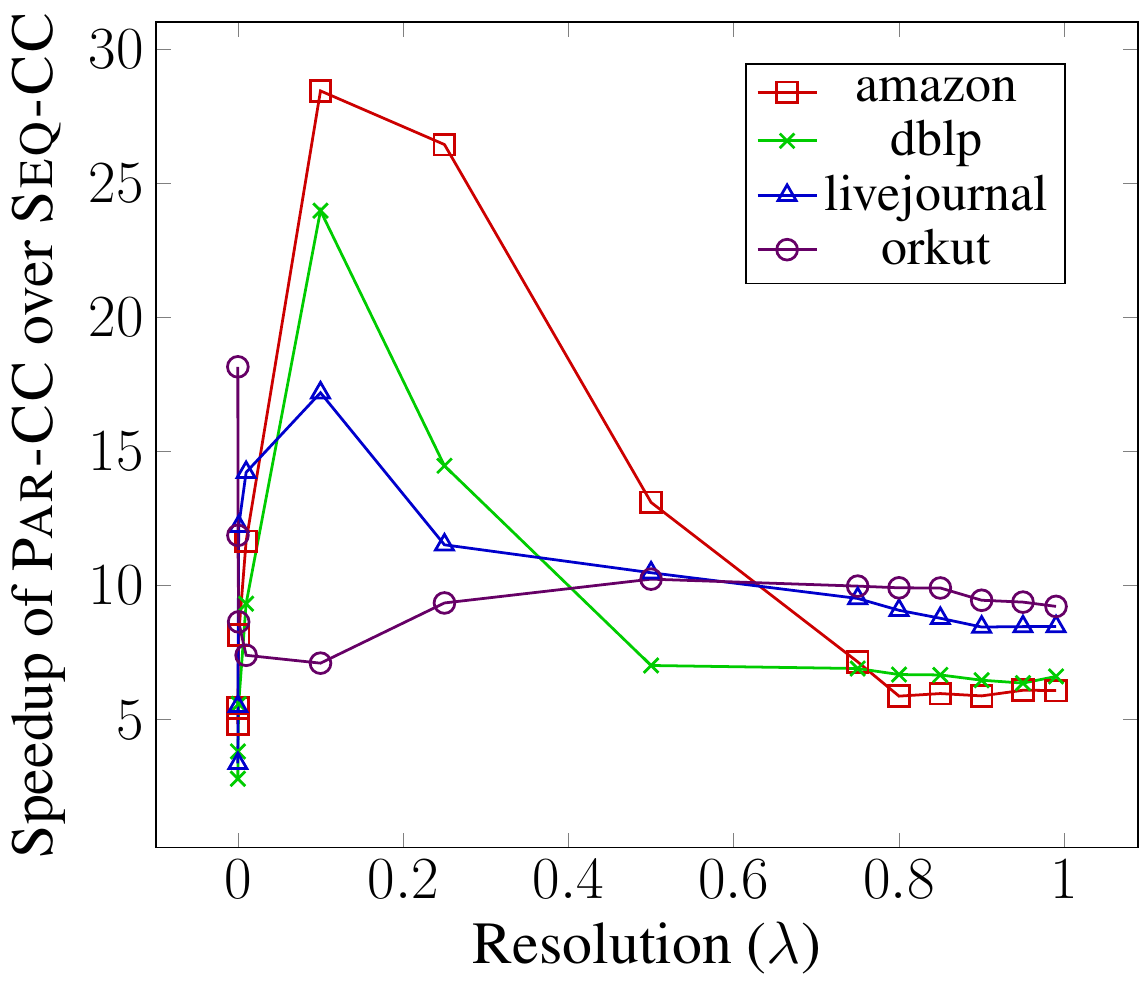}
\end{subfigure}
\hfill
\begin{subfigure}[t]{0.22\textwidth}
\centering
\includegraphics[width=\textwidth,page=8]{figures/fig-new2.pdf}
\end{subfigure}
\caption{{\color{change} Scalability of \textsc{Par-CC} over rMAT graphs of varying sizes.}}
\label{fig:exp7}
\end{figure}

\begin{figure}[t!]
\centering
\begin{subfigure}[t]{0.2\textwidth}
\centering
\includegraphics[width=\textwidth,page=3]{figures/fig.pdf}
\end{subfigure}
\hfill
\begin{subfigure}[t]{0.2\textwidth}
\centering
\includegraphics[width=\textwidth,page=1]{figures/fig-new2.pdf}
\end{subfigure}
\caption{{\color{change} Scalability of \textsc{Par-CC} over different numbers of threads, on amazon, orkut, twitter, and friendster, with $\lambda = 0.01, 0.85$. 30h and 48h indicate 30 and 48 cores respectively, with two-way hyper-threading.}}
\label{fig:exp6}
\end{figure}

\begin{figure}[t!]
\centering
\includegraphics[width=0.35\textwidth,page=7]{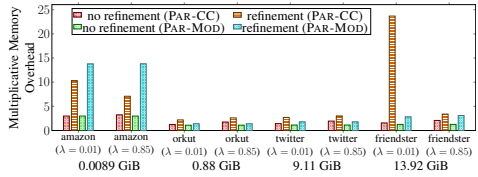}
\caption{{\color{change} Multiplicative memory overhead, over the size of the input graph, of \textsc{Par-CC} and \textsc{Par-Mod}, on amazon, orkut, twitter, and friendster, with $\lambda = 0.01, 0.85$. The labels below denote the size of the input graph.}}
\label{fig:expmem}
\end{figure}

On the graphs amazon, dblp, livejournal, and orkut, and over varying resolutions, we see 3.19--27.38x speedups of \textsc{Par-CC} over \textsc{Seq-CC}, and 12.55--110.25x speedups of \textsc{Par-CC} over \textsc{Seq-CC}${}^\textsc{con}$. Our parallel implementations also achieve between 0.98--1.08x the \ccobj{} objective of our serial implementations, demonstrating high performance while maintaining the \ccobj{} objective. For \textsc{Par-Mod}, we see between 3.18--7.76x speedups over \textsc{Seq-Mod}, and 2.64--7.89x speedups over \textsc{Seq-Mod}${}^\textsc{con}$, while achieving 1.00--1.06x the modularity of the serial implementations.

On the large graphs twitter and friendster, and over varying resolutions, we see 4.57--17.87x speedups of our \textsc{Par-CC} over \textsc{Seq-CC}, and up to 7.74x speedups of \textsc{Par-Mod} over \textsc{Seq-Mod}. We achieve between 0.95--1.00x the \ccobj{} objective of \textsc{Seq-CC}, and between 0.96--1.02x the modularity of \textsc{Seq-Mod}.
\textsc{Seq-CC} timed out on twitter for $\lambda = 0.01$, $0.1$, and $0.25$, and on friendster for $\lambda < 0.75$. We observe lower speedups using \textsc{Par-Mod} on twitter, which we discuss in the \intheapp{}.

Figure \ref{fig:exp5-numiter} shows the multiplicative increase in the total number of iterations required for \textsc{Par-CC} and \textsc{Par-Mod} over \textsc{Seq-CC} and \textsc{Seq-Mod}, which approximately displays the inverse of the behavior seen in the speedups shown in Figure \ref{fig:exp5-trunc} across different resolutions. We see greater speedups for resolutions where the number of iterations required in our parallel implementations match or are lower than the number of iterations required in our serial implementations.
Whenever a greater number of iterations is required in parallel compared to serial, we observed lower speedups, simply due to the increased amount of work carried out by the parallel version.

\myparagraph{Scalability} {\color{change} Figure \ref{fig:exp7} demonstrates the scalability of \textsc{Par-CC} over rMAT graphs of varying sizes, with very sparse graphs ($m = 5n$), sparse graphs ($m = 50n$), dense graphs ($m = n^{1.5}$), and very dense graphs ($m = n^2$). We also show similar results for \textsc{Par-Mod} in the \intheapp{}.} We see that for both of our algorithms and across different resolutions ($\lambda = 0.01, 0.85$), the running times of our algorithms scale nearly linearly with the number of edges.
{\color{change} Figure \ref{fig:exp6} shows the speedups of \textsc{Par-CC} on amazon, orkut, twitter, and friendster, over different numbers of threads. We also show the speedups of \textsc{Par-Mod} on the same graphs is given in the \intheapp{}. Overall, we see good parallel scalability, with 5.59--14.97x self-relative speedups on \textsc{Par-CC}, and 1.89--14.51x
self-relative speedups on \textsc{Par-Mod}. 
Note that using fewer threads for \textsc{Par-CC} times out on friendster for $\lambda = 0.01$, and we again see lower speedups for \textsc{Par-Mod} on twitter, where there is increased contention in using atomic compare-and-swaps due to the very few clusters produced relative to the size of the graph. Excluding twitter, we see 5.29--14.51x self-relative speedups on \textsc{Par-Mod}.}

{\color{change} 
\myparagraph{Memory Usage} Figure \ref{fig:expmem} shows the memory usage of \textsc{Par-CC} and \textsc{Par-Mod} on amazon, orkut, twitter, and friendster. \footnote{The size of the input graph provided in Figure \ref{fig:expmem} is the total size in CSR format \cite{Saad03}, which uses approximately 8 bytes per undirected edge.} Theoretically, our memory usage is linear in the size of the input graph. We incur more memory when using multi-level refinement, particularly if more coarsening rounds are required. This is because with refinement, we store the intermediate coarsened graph from each recursive step, whereas without refinement, we discard these graphs. 
For instance, we see that for \textsc{Par-CC} on friendster with $\lambda = 0.01$, four coarsening rounds 
are used, compared to $\lambda = 0.85$, where only one coarsening round is used, hence the difference in memory overhead.
Overall, using refinement, we incur a 1.40--23.68x memory overhead over the size of the input graph, whereas without refinement, we incur a 1.25--3.24x overhead.}

\myparagraph{Comparisons to Other Implementations}
In the special case of modularity, we compare against the highly optimized parallel modularity clustering implementation \textsc{networkit} \cite{networkit}.
\textsc{networkit}, like \textsc{Par-Mod}, implements an asynchronous version of Louvain-based modularity clustering. We discuss in the \intheapp{} the speedups of \textsc{Par-Mod} over \textsc{networkit}. We see up to 3.50x speedups (1.89x on average), primarily due to our optimization of the graph compression step, which we discuss below. We also obtain between 0.99 -- 1.00x the modularity given by \textsc{networkit}'s implementation, where some variance appears due to the asynchronous nature of both implementations.

\begin{figure}[t!]
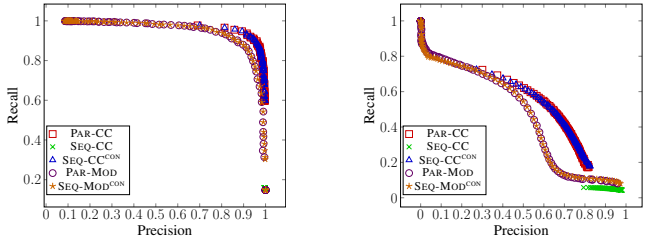

\centering
\begin{subfigure}[t]{0.21\textwidth}
\centering
\includegraphics[width=\textwidth,page=5]{figures/fig.pdf}
\end{subfigure}
\hfill
\begin{subfigure}[t]{0.21\textwidth}
\centering
\includegraphics[width=\textwidth,page=6]{figures/fig.pdf}
\end{subfigure}
\caption{Average precision and recall compared to ground truth communities on amazon (left) and orkut (right) of the clusters obtained from \textsc{Par-CC} and \textsc{Par-Mod}, compared to \textsc{Seq-CC} and \textsc{Seq-Mod}, using varying resolutions.}
\label{fig:exp2-pr}
\end{figure}

\begin{figure}[t!]
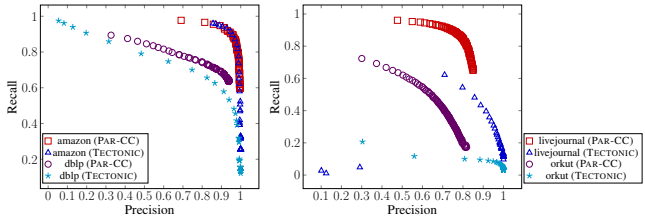

\centering
\begin{subfigure}[t]{0.19\textwidth}
\centering
\includegraphics[width=\textwidth,page=17]{figures/fig.pdf}
\end{subfigure}
\hfill
\begin{subfigure}[t]{0.28\textwidth}
\centering
\includegraphics[width=\textwidth,page=20]{figures/fig.pdf}
\end{subfigure}
\caption{Average precision and recall compared to ground truth communities on amazon and dblp (left), and on livejournal and orkut (right), of the clusters obtained from \textsc{Par-CC} using varying resolutions, and from \textsc{Tectonic} using varying $\theta$.}
\label{fig:exp2-pr-overlay}
\end{figure}

Our speedups over \textsc{networkit} are primarily because \textsc{networkit} does not efficiently parallelize the graph compression step between rounds of best vertex moves. Our implementations use a work-efficient algorithm to parallelize this step, where intra-cluster edges are aggregated in polylogarithmic depth with an efficient parallel sort, whereas no such guarantee is made in \textsc{networkit}.

Additionally, we compare to the sequential \textsc{Tectonic} implementation \cite{TsPaMi17}, which clusters based on the idea of triangle conductance and provides good average precision-recall compared to ground truth communities on SNAP graphs. Like \textsc{Par-CC}, \textsc{Tectonic} uses a parameter $\theta$ that can be set to achieve a range of clusters with varying average precision and recall. We discuss in more detail in Section \ref{sec:exp:quality:uw} the comparison between the quality of \textsc{Par-CC}'s and \textsc{Tectonic}'s clusters, but for outputs where \textsc{Par-CC} either outperforms or matches \textsc{Tectonic} in terms of average precision and recall considering $\lambda \in \{ 0.01x \mid x \in [1, 99]\}$ and $\theta \in \{0.01x \mid x \in [1, 299]\}$ respectively, we see between 2.48--67.62x speedup of \textsc{Par-CC} over \textsc{Tectonic} on the graphs amazon, dblp, livejournal, and orkut. Notably, \textsc{Par-CC} significantly outperforms \textsc{Tectonic} on large graphs, with between 34.22--67.62x speedups on orkut.

Finally, we compare to \textsc{SCD} \cite{SCD}, a parallel triangle-based community detection implementation. \textsc{Par-CC} gives up to 2.89x speedups over \textsc{SCD} for resolutions that give comparable or better quality than \textsc{SCD}, in terms of average precision and recall. We discuss further details in the \intheapp{}.

\subsection{Quality Compared to Ground Truth} \label{sec:exp:quality:uw}

Figure \ref{fig:exp2-pr} shows the average precision-recall curves obtained by \textsc{Par-CC} and \textsc{Par-Mod}, varying resolutions, compared to the top 5000 ground truth communities on amazon and orkut. For \textsc{Par-CC}, we set $\lambda \in \{ 0.01 x \mid x \in [1, 99] \}$, and for \textsc{Par-Mod}, we set $\gamma \in \{ 0.02 \cdot (1.2)^x \mid x \in [1, 99]\}$.
We compare these curves to those obtained by \textsc{Seq-CC} and \textsc{Seq-Mod}, running with both the same number of iterations ($num\_iter=10$) as the parallel implementations, and to convergence. Note that we only show \textsc{Seq-Mod}${}^\textsc{con}$, because the version limiting the number of iterations displays precisely the same average precision-recall curve as the version running to convergence.

Overall, the average precision-recall obtained by \textsc{Par-CC} and \textsc{Par-Mod} match those obtained by their sequential counterparts for both amazon and orkut. Note that if we do not run \textsc{Seq-CC} to convergence, we obtain relatively poor precision-recall compared to \textsc{Par-CC}, suggesting that more progress is made in fewer iterations in our parallel implementation. {\color{change} This is likely inherent in the behavior in the asynchronous setting of our implementations, where the consistency guarantees are relaxed in a way such that vertices can more easily move better clusters.
}
Overall, \textsc{Par-CC} offers a better precision-recall trade-off compared to \textsc{Par-Mod}, which shows the benefits of using the \ccobj{} objective. We also show in the \intheapp{} the same behavior on dblp and livejournal.

Figure \ref{fig:exp2-pr-overlay} shows the average precision-recall curves for \textsc{Tectonic} \cite{TsPaMi17}, considering $\theta \in \{0.01x \mid x \in [1, 299]\}$, which we compare to \textsc{Par-CC} on amazon, dblp, livejournal, and orkut. We see that \textsc{Tectonic} achieves similar precision-recall trade-offs on amazon, but \textsc{Par-CC} obtains much better precision-recall on dblp, livejournal, and orkut. Notably, \textsc{Tectonic} degrades significantly on the larger graphs livejournal and orkut compared to \textsc{Par-CC}.

We also conducted quality experiments on weighted graphs, but we defer the details to the \intheapp{}.

\section{Conclusion}
We have designed and evaluated a comprehensive and scalable parallel clustering framework, which captures both correlation and modularity clustering. Our framework  offers settings with trade-offs between performance and quality.
We obtained significant speedups over existing state-of-the-art implementations that scale to large datasets of up to billions of edges.
Moreover, we showed that optimizing for correlation clustering objective results in higher-quality clusters with respect to ground truth data, compared to other methods used in highly-scalable clustering implementations. This shows the significance of the correlation clustering objective for community detection.
Finally, we proved the \pcomplete{}ness of the Louvain-like algorithms for parallel modularity and correlation clustering.

\begin{acks}
This research was supported by NSF Graduate Research Fellowship
\#1122374.
We thank Christian Sohler for insightful discussions on evaluating clustering quality.
\end{acks}

\balance

\bibliographystyle{ACM-Reference-Format}
\bibliography{ref}

\appendix 
\section{Sequential Louvain Method}

Algorithm \ref{alg:sequential-louvain} shows the pseudocode for 
the classic sequential Louvain method from Blondel \textit{et al.} \cite{BlGuLaLe08}, \textsc{Sequential-CC}.

Note that the computation necessary to determine the cluster that a vertex $v$ should move to given the objective function is omitted from the pseudocode for simplicity, but it can be efficiently performed by maintaining in every iteration the total vertex weight of each cluster in $C$. More precisely, if we denote the total vertex weight of a cluster $c$ by $K_c$, the change in objective of a vertex $v$ moving from its current cluster $c$ to a new cluster $c'$ (where $c \neq c'$) is given by
$$\left(\sum_{u \in c', (u, v) \in E} w_{uv} - \lambda k_v K_{c'} \right) - \left(\sum_{u \in c, (u, v) \in E} w_{uv} - \lambda k_v K_c+ \lambda k_v^2\right).$$
In other words, the change in objective depends solely on $K_c$, $K_{c'}$, and the weights of the edges from $v$ to its neighbors in $c$ and $c'$.

\begin{algorithm}[t!]
\small
\caption{Sequential Louvain method for correlation clustering}
\begin{algorithmic}[1]
\Procedure{Sequential-CC}{$G$, $k$}
\State $C \leftarrow$ singleton clusters for each $v \in V$ \label{c:line-init}
\Do
\State $\sigma \leftarrow$ random permutation of $V$
\For{each $v = \sigma(i)$} \label{c:line-best-move}
\State Move $v$ to the cluster in $C$ that maximizes the \ccobj{} objective
\EndFor \label{c:line-end-best-move}
\doWhile{the objective \ccobj{($C$)} has increased} \label{c:line-best-move-loop}
\If{no moves were made}
\State \Return $C$ \label{c:line-no-moves}
\EndIf
\State $G', k' \leftarrow$ \textsc{Sequential-Compress}($G$, $C$) \label{c:line-compress}
\State $C' \leftarrow$ \textsc{Sequential-CC}($G'$, $k'$)
\State \Return \textsc{Sequential-Flatten}($C$, $C'$) \label{c:line-flatten}
\EndProcedure
\end{algorithmic}
\label{alg:sequential-louvain}
\end{algorithm}

\begin{table}[t!]
\begin{center}
\footnotesize
\begin{tabular}[!t]{l|r|r}
\toprule
{Graphs} & Num. Vertices & Num. Edges \\
\midrule
{amazon  }         & 334,863       &925,872  \\
{dblp }           & 317,080        &1,049,866  \\
{livejournal  }   & 3,997,962  &34,681,189  \\
{orkut  }    & 3,072,441    &117,185,083  \\
{twitter  } & 41,652,231 &1,202,513,046 \\
friendster & 65,608,366      &1,806,067,135 \\
\end{tabular}
\end{center}
\caption{Sizes of graph inputs.} \label{table:sizes}
\end{table}

\section{Other Optimizations}
We make use of other practical optimizations in our parallel implementation of \textsc{Parallel-CC}. First, we use the theoretically efficient parallel primitives available in the Graph Based Benchmark Suite (GBBS) \cite{DhShTsBlSh20}. 
Overall, these primitives and the work-efficient scheduler provided in GBBS offer on average a 1.43x speedup over Intel's Parallel STL library \cite{BlAnDh20}.
Importantly, we use the \textsc{EdgeMap} primitive from GBBS to maintain the frontier of neighbors of moved vertices or of modified clusters in each step of \textsc{Best-Moves}. \textsc{EdgeMap} takes a vertex subset and applies a user-defined function to generate a new vertex subset -- in our case, generated from specified neighbors. The primitive switches between a sparse and a dense representation of the subset depending on size, and the implementation of \textsc{EdgeMap} similarly changes depending on the size of the input subset and the number of outgoing edges.

We also efficiently parallelize the sequential graph compression and cluster flattening subroutines, \textsc{Sequential-Compress} and \textsc{Sequential-Flatten} respectively. Flattening a given clustering $C$ to the clustering $C'$ from the coarsened graph can be parallelized by maintaining a set of cluster IDs for each vertex (given by the index in $[0, n]$ of the cluster in $C$), and assigning for each vertex, the cluster ID in $C'$ corresponding to the cluster containing its cluster ID in $C$. Moreover, we parallelize graph compression by aggregating in parallel the edges in the original graph by the cluster IDs of their endpoints, and using parallel reduces to combine edges whose endpoints correspond to the same cluster ID. 

Furthermore, in computing each vertex's desired cluster on Line \ref{pc:line-compute-move}, we make use of a parallel and a sequential subroutine, which we choose heuristically depending on the degree of the vertex. The change in objective for moving a vertex $v$ to other clusters can be efficiently parallelized by iterating through $v$'s neighbors in parallel and using a parallel hash table~\cite{gil91a}, from the GBBS implementation, to maintain the sum of edge weights to neighbors in the same cluster. However, for vertices of small degree and with large $V'$, the parallel overhead of maintaining such a hash table for each vertex is too costly. On the other hand, for vertices of large degree and with small $V'$, the parallel overhead is negligible compared to the improved depth in utilizing a parallel hash table. We use a fixed threshold to choose between using the sequential subroutine versus using the parallel subroutine.

\section{Additional Experiments}

Table \ref{table:sizes} shows the details of the SNAP graphs that we perform experiments upon.

Figure \ref{fig:exp1-num-rounds} shows the size of $V'$ comparing neighbors of clusters and neighbors of vertices as the subset $V'$, for \textsc{Par-CC} on amazon and orkut, using the synchronous setting and no refinement. In particular, for $\lambda = 0.85$ on amazon, the neighbors of vertices optimization obtains 1.61x speedup over all vertices, whereas the neighbors of clusters optimization obtains 1.16x speedup over all vertices. This is reflected in Figure \ref{fig:exp1-num-rounds}, where there is a significant difference in the size of $V'$ in this case.

We note that we see lower speedups using \textsc{Par-Mod} on twitter. 
We suspect that this is because twitter has a few vertices of particularly high degree (the maximum degree is 2,997,487, compared to the maximum degree in friendster of 5,214), and,
across all resolutions, \textsc{Par-Mod} and \textsc{Seq-Mod} produce significantly fewer clusters on twitter relative to the size of the graph, compared to other graphs. In particular, for modularity clustering, the average cluster size on twitter is $2100$--$2.08 \times 10^{7}$ across all resolutions,
whereas the average cluster size on friendster, a graph of similar size, is 1.11. As such, we see significantly increased contention on twitter.

Figure \ref{fig:exp7-dense} shows the scalability of \textsc{Par-Mod} over rMAT graphs of varying sizes, with very sparse graphs where $m = 5n$, sparse graphs where $m = 50n$, dense graphs where $m = n^{1.5}$, and very dense graphs where $m = n^2$.
Figure \ref{fig:exp6-large} shows the speedups of \textsc{Par-Mod} on amazon, orkut, twitter, and friendster, over different numbers of threads.

We show in Figure \ref{fig:exp2-mod-dblp-lj} the average precision and recall compared to ground truth communities on dblp and lj, of the clusters obtained by \textsc{Par-CC} and \textsc{Par-Mod} for varying resolutions. As discussed in Section \ref{sec:exp:quality:uw}, we see that overall, \textsc{Par-CC} offers a better precision-recall trade-off compared to \textsc{Par-Mod}.

\subsection{Experiments on Existing Baselines}

We note that there exist no prior scalable correlation clustering baselines that offer high quality in terms of objective. The existing implementations are the parallel \textsc{C4} and \textsc{ClusterWild!} \cite{PaPaOyReRaJo15}, which are based on a maximal independent set algorithm, and the sequential Louvain-based method in \textsc{LambdaCC} \cite{VeGlWi18}.

\begin{figure}[t!]
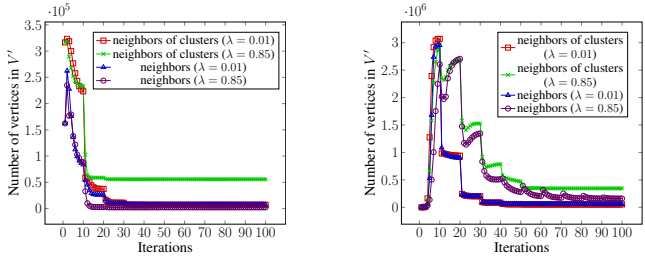

\centering
\begin{subfigure}[t]{0.21\textwidth}
\centering
\includegraphics[width=\textwidth,page=6]{figures/fig-large.pdf}
\end{subfigure}
\hfill
\begin{subfigure}[t]{0.21\textwidth}
\centering
\includegraphics[width=\textwidth,page=7]{figures/fig-large.pdf}
\end{subfigure}
\caption{Number of vertices in $V'$ per best move iteration of \textsc{Par-CC}, for amazon (left) and orkut (right), considering neighbors of clusters and neighbors of vertices as the subset $V'$.}
\label{fig:exp1-num-rounds}
\end{figure}

\begin{figure}[t!]
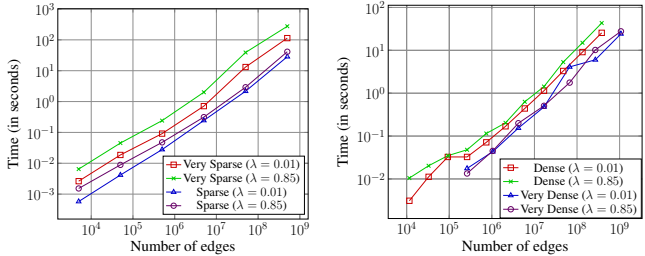

\centering
\begin{subfigure}[t]{0.23\textwidth}
\centering
\includegraphics[width=\textwidth,page=14]{figures/fig.pdf}
\end{subfigure}
\hfill
\begin{subfigure}[t]{0.23\textwidth}
\centering
\includegraphics[width=\textwidth,page=9]{figures/fig-new2.pdf}
\end{subfigure}
\caption{Scalability of \textsc{Par-Mod} over rMAT graphs with varying numbers of vertices.}
\label{fig:exp7-dense}
\end{figure}

\begin{figure}[t!]
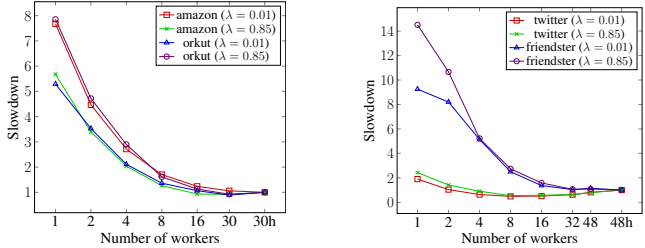

\centering
\begin{subfigure}[t]{0.21\textwidth}
\centering
\includegraphics[width=\textwidth,page=4]{figures/fig.pdf}
\end{subfigure}
\hfill
\begin{subfigure}[t]{0.21\textwidth}
\centering
\includegraphics[width=\textwidth,page=2]{figures/fig-new2.pdf}
\end{subfigure}
\caption{Scalability of \textsc{Par-Mod} over different numbers of threads, on amazon, orkut, twitter, and friendster, with $\lambda = 0.01, 0.85$. Note that 30h and 48h indicate 30 and 48 cores respectively, with two-way hyper-threading.}
\label{fig:exp6-large}
\end{figure}

The parallel correlation clustering implementations \textsc{C4} and \textsc{ClusterWild!} \cite{PaPaOyReRaJo15} optimize for the same objective as our \ccobj{} objective, if we set $\lambda = 0.5$; importantly, they do not generalize to other resolution parameters, and they do not take into account weighted graphs. We test the asynchronous versions of \textsc{C4} and \textsc{ClusterWild!}, which also outperform the synchronous versions while maintaining the objective, on the graphs amazon, dblp, livejournal, and orkut. Both implementations offer significant speedups over \textsc{Par-CC}, of up to 139.43x and 428.64x respectively. However, rescaling the objectives given by \textsc{C4} and \textsc{ClusterWild!} to match the \ccobj{} objective, we see that \textsc{C4} and \textsc{ClusterWild!} decrease the objective by 273.35 -- 433.31\% over \textsc{Par-CC}. Notably, the objectives given by \textsc{C4} and \textsc{ClusterWild!} are often negative, meaning that they are unsuitable for optimizing for the \ccobj{} objective. Moreover, compared to the top 5000 ground truth communities on these graphs, \textsc{C4} and \textsc{ClusterWild!} achieve poor precision and recall, with precision between 0.44--0.65 and corresponding recall between 0.10--0.15. In comparison, \textsc{Par-CC} achieves recall between 0.61--0.98 for precision greater than 0.50.

Additionally, we compare against the Louvain-based sequential correlation clustering implementation in \textsc{LambdaCC} given by Veldt \textit{et al.} \cite{VeGlWi18}. Unfortunately, this implementation does not scale to large graphs of more than hundreds of vertices. We were able to test \textsc{LambdaCC} on the karate graph \cite{Zachary77}, which consists of 34 vertices and 78 edges. For $\lambda = 0.01$, \textsc{LambdaCC} takes 0.057 seconds to cluster the karate graph, whereas our \textsc{Par-CC} takes 0.0002 seconds. The slowness of \textsc{LambdaCC} is because the code is in MATLAB, and it uses an adjacency matrix to represent the input graph; as such, it is unable to efficiently perform sparse graph operations.

We compare to \textsc{networkit} \cite{networkit} in the special case of modularity.
Note that \textsc{networkit}, like \textsc{Par-Mod}, implements an asynchronous version of Louvain-based modularity clustering, and requires a parameter $num\_iter$ to guarantee completion. By default \textsc{networkit} sets $num\_iter = 32$, so to compare with  \textsc{Par-Mod}, we similarly set $num\_iter = 32$.
We show in Figure \ref{fig:exp4-nk} the speedups of \textsc{Par-Mod} over \textsc{networkit} on amazon, dblp, livejournal, and orkut, for varying resolutions. We see up to 3.50x speedups, primarily due to our optimization of the graph compression step, and on average 1.89x speedups. For the twitter graph, setting $\gamma = 0.01, 0.85$, \textsc{Par-Mod} gives between 1.08 -- 3.03x speedups over \textsc{networkit} while maintaining comparable modularity. For the friendster graph, we turn off \textsc{networkit}'s turbo parameter (which offers a trade-off between memory usage and performance) due to space constraints, and setting $\gamma = 0.01, 0.85$, \textsc{Par-Mod} gives between 1.23--1.26x speedups over \textsc{networkit} while maintaining comparable modularity.

Finally, we compare to \textsc{SCD} \cite{SCD}, a parallel triangle-based community detection implementation. Note that \textsc{SCD} is not able to vary parameters to obtain clusters with significantly different average precision and recall. For amazon, dblp, and livejournal, our \textsc{Par-CC} implementation achieves 2.00--2.89x speedups over \textsc{SCD} while maintaining the same average precision and recall. For orkut, \textsc{SCD} obtains an average precision of 0.15 and an average recall of 0.05, while \textsc{Par-CC} can obtain an average precision of 0.61 and an average recall of 0.53 with 1.31x speedup over \textsc{SCD}.

\subsection{Experiments on Weighted Graphs}

We use an approximate $k$-NN algorithm to construct weighted graphs from pointset data. Specifically, we use the Optical Recognition of Handwritten Digits (digits) dataset (1,797 instances) and the Letter Recognition (letter) dataset (20,000 instances) from the UCI Machine Learning repository \cite{DuGr17}, both of which also have ground truth clusters which we compare to. We use the state-of-the-art ScaNN $k$-NN library~\cite{scann} to perform $k$-NN, with $k = 50$ and using cosine similarity. We symmetrize the resulting $k$-NN graph.

For weighted graphs $G$, we test both our implementations treating $G$ as an unweighted graph (with unit weight edges), and our implementations treating $G$ as a weighted graph. We denote the former with no superscript, and the latter with the superscript ${}^\textsc{w}$.
The sequential implementations \textsc{Seq-CC} and \textsc{Seq-Mod} give similar results to the corresponding parallel implementations, so we discuss only the parallel implementations.

\begin{figure}[t!]
\begin{center}
\includegraphics[width=0.45\columnwidth,page=21]{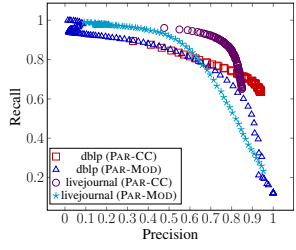}
\caption{Average precision and recall compared to ground truth communities on dblp and livejournal of the clusters obtained from our parallel implementations \textsc{Par-CC} and \textsc{Par-Mod}, using varying resolutions.}
\label{fig:exp2-mod-dblp-lj}
\end{center}
\end{figure}

\begin{figure}[t!]
\centering
\begin{subfigure}[t]{0.23\textwidth}
\centering
\includegraphics[width=\textwidth,page=7]{figures/fig.pdf}
\end{subfigure}
\hfill
\begin{subfigure}[t]{0.235\textwidth}
\centering
\includegraphics[width=\textwidth,page=10]{figures/fig.pdf}
\end{subfigure}
\caption{Average precision-recall and ARI-NMI scores for the digits graph, from \textsc{Par-CC}, \textsc{Par-Mod}, \textsc{Par-CC}${}^\textsc{w}$, and \textsc{networkit}.}
\label{fig:exp8-digits}
\end{figure}

\begin{figure}[t!]
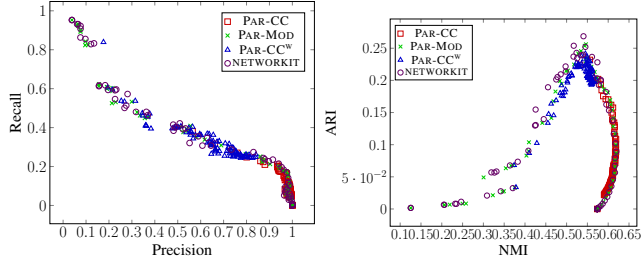

\centering
\begin{subfigure}[t]{0.23\textwidth}
\centering
\includegraphics[width=\textwidth,page=8]{figures/fig.pdf}
\end{subfigure}
\hfill
\begin{subfigure}[t]{0.24\textwidth}
\centering
\includegraphics[width=\textwidth,page=9]{figures/fig.pdf}
\end{subfigure}
\caption{Average precision-recall and ARI-NMI scores for the letter graph, from \textsc{Par-CC}, \textsc{Par-Mod}, \textsc{Par-CC}${}^\textsc{w}$, and \textsc{networkit}.}
\label{fig:exp8-letter}
\end{figure}

Figures \ref{fig:exp8-digits} and \ref{fig:exp8-letter} show the average precision-recall and ARI-NMI scores for the digits and letter graph respectively. We compare our parallel implementations to \textsc{networkit}'s modularity clustering implementation, which can also take as input weighted graphs \cite{networkit}. \textsc{networkit} matches our \textsc{Par-Mod}${}^\textsc{w}$ implementation, so we omit the latter from our figures. We consider our implementations both treating the graphs as unweighted, and taking into account the edge weights. Moreover, we consider a range of resolutions, with $\lambda \in \{ 0.01 x \mid x \in [1, 99] \}$ for the \ccobj{} objective, and $\gamma \in \{ 0.02 \cdot (1.2)^x \mid x \in [1, 99]\}$ for the modularity objective.
Overall, compared to \textsc{networkit}, \textsc{Par-CC}${}^\textsc{w}$ is more robust across different resolution parameters compared to other implementations.

\section{P-Completeness of Louvain}

We prove here that the problem of obtaining the clustering given by the Louvain method maximizing for the \ccobj{} objective is P-complete.

\begin{theorem}\label{pcomplete}
The problem of obtaining a clustering equivalent to that given by the Louvain method maximizing for the \ccobj{} objective is P-complete.
\end{theorem}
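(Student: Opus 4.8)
The plan is to establish both directions of \pcomplete{}ness: membership in \pclass{} and \pclass-hardness under log-space reductions. To make the statement precise I would fix the decision version as follows: given a weighted graph $G$, vertex weights $k$, a resolution $\lambda$, a permutation $\sigma$ of $V$ specifying the order in which vertices are considered, and two designated vertices $u,v$, decide whether $u$ and $v$ lie in the same cluster of the clustering produced by running the local-move phase of the Louvain method (each vertex moving to the cluster that maximizes the \ccobj{} objective) in the order $\sigma$ to convergence. Providing $\sigma$ as part of the input removes the method's randomness and is exactly the setting in which the sequential dependency of Line~\ref{pc:line-compute-move} is the bottleneck.

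For membership in \pclass{}, I would simply simulate the method. Each local move is computable in time proportional to the degree of the moved vertex using the change-in-objective formula from the \intheapp{} (which depends only on $K_c$, $K_{c'}$, and the weights of edges from $v$ into $c$ and $c'$), and every nontrivial move strictly increases the bounded \ccobj{} objective by at least the granularity of the polynomially-bounded rational weights. Hence the number of moves, and the total work, is polynomial, after which any polynomial-time-checkable query such as ``are $u$ and $v$ co-clustered'' is answered directly.

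For \pclass-hardness, I would reduce from the Monotone Circuit Value Problem, which is \pcomplete{} under log-space reductions and, being monotone, requires only \textsc{And}/\textsc{Or} gadgets (no \textsc{Not}). Given a monotone circuit listed in topological order together with an input assignment, I would represent each wire by a gate vertex, set $\sigma$ to follow the topological order so that when a gate vertex is processed all of its inputs are already fixed, and encode the Boolean value of a wire by whether its gate vertex has joined a designated ``true'' cluster. Using the fact that the gain of moving a vertex into a cluster is the sum of its edge weights into that cluster minus the penalty $\lambda k_v K_{c'}$, I would give each gate vertex positive edges to the cluster(s) representing its inputs and tune the weights so that for an \textsc{Or} gate a single true input makes the move profitable, whereas for an \textsc{And} gate only the combined weight of two true inputs crosses the threshold; larger fan-in gates are decomposed into fan-in-two gates, and fan-out is handled by small copy gadgets duplicating a true value into fresh vertices. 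The reduction then asks whether the output gate vertex ends in the same cluster as the true-seed vertex.

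The hard part will be controlling the penalty term $\lambda k_v K_{c'}$, which grows with the total weight of a cluster: if all true signals accumulated in one global cluster, later gates would face an ever-increasing threshold and the thresholds realizing \textsc{And} versus \textsc{Or} would drift, breaking the encoding. I would address this by keeping every cluster of bounded ($O(1)$) size, propagating signals only along edges between per-gate gadgets rather than through a single growing cluster, so that the penalty is $O(\lambda)$ and the edge weights realizing the two thresholds can be chosen as fixed small rationals that are robust against it. Two further details that need care are (i) ensuring each gadget is \emph{stable}, so that no vertex wishes to move again on subsequent passes and the single topological-order sweep fixes the answer, and (ii) verifying that all weights and the permutation are polynomially bounded and computable in logarithmic space, so that the reduction is a genuine \lclass-level reduction.
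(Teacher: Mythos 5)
Your high-level plan (reduce from monotone CVP, encode wire values by cluster membership, realize \textsc{And}/\textsc{Or} by weight thresholds on the gain of a local move) is the same strategy the paper uses, but two of the ingredients you leave open are exactly the ones the paper's construction is built around, and your proposed workarounds do not close them. First, the penalty term: the paper simply sets $\lambda = 0$ in the reduction, so the gain of a move is just the sum of edge weights into the target cluster and the ``true'' and ``false'' clusters can grow without bound. Your alternative --- keeping every cluster of $O(1)$ size and propagating signals between per-gate gadgets --- is in direct conflict with your own readout criterion (``is the output gate vertex in the same cluster as the true-seed vertex''): if true signals never accumulate in one cluster, the output gate will in general not be co-clustered with the seed even when the circuit evaluates to true, and you give no gadget that reconciles bounded clusters with a readable answer. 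Second, you have no mechanism preventing a gate's decision from being influenced by its \emph{downstream} neighbors (fan-out), nor for keeping a gate from committing before its inputs are resolved under an adversarial or repeated schedule. The paper handles both with a concrete device you would need: gate weights $w_{ijk}$ that decay along a topological order (via a prefix product of degrees), so that the total weight of all edges leaving a gate toward later gates is strictly less than $w_{ijk}$, together with a partner vertex $g_k'$ joined by an edge of weight $(2+\tfrac{2}{3}\varepsilon)w_{ijk}$ and the $(1+\varepsilon)$ versus $(1+\tfrac{1}{2}\varepsilon)$ asymmetry toward $t$ and $f$; these make every gate's best move correct \emph{whenever} it is scheduled, which is what an induction over an arbitrary move sequence needs.

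This last point also matters for what is being proved. You make the problem well-defined by putting the permutation $\sigma$ into the input and running a topological-order sweep; that proves hardness of an order-annotated variant, not of ``obtaining a clustering equivalent to that given by the Louvain method,'' whose vertex order is random. The paper's construction is order-independent --- all schedules converge to the same clustering encoding the circuit value --- and that robustness is what lets the hardness apply to the Louvain method itself; your reduction controls only one schedule and defers stability under later passes to ``a detail that needs care.'' Finally, your membership argument is shaky as stated: with rational weights given in binary the minimum improvement per move can be exponentially small, so ``each move increases the objective by the weight granularity'' yields only a pseudo-polynomial bound on the number of moves (the paper sidesteps this by proving only the hardness direction).
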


\begin{proof}
 We set $\lambda = 0$ for the purposes of this proof. We will show that there is an NC reduction from the monotone circuit-value problem (CVP), where given an input circuit $C$ of $\wedge$ and $\lor$ gates on $n$ variables $(x_1, \ldots, x_n)$ and their negations, and an assignment of truth to these variables, the problem is to compute the output value of $C$.

The reduction works as follows. We use a fixed small $\varepsilon > 0$, and we construct a graph $G$. Initially, the vertices of $G$ are the literals $(x_1, \ldots, x_n)$ and their negations, and two additional vertices $t$ and $f$ (representing true and false respectively). We assign an edge with a large enough constant negative weight between $t$ and $f$. We also assign an edge with a large enough constant positive weight between each literal and its corresponding $t$ or $f$, depending on if the literal is true or false respectively.

Then, for every gate, say $g_i \lor g_j$ which outputs to $g_k$, we add a corresponding vertex $g_k$ and $g_k'$ in $G$. We use a weight $w_{ijk}$, which we define later. We also add edges of weight $w_{ijk}$ between $(g_i, g_k)$ and $(g_j, g_k)$, and an edge of weight $(2 + \frac{2}{3} \cdot \varepsilon) w_{ijk}$ between $(g_k, g_k')$. Finally, we add an edge of weight $(1 + \varepsilon) w_{ijk}$ between $(g_k, t)$, and an edge of weight $(1 + \frac{1}{2} \cdot \varepsilon) w_{ijk}$ between $(g_k, f)$. Figure \ref{fig:pcomplete} shows these vertices and edges.

We then define $w_{ijk}$ as follows, for each gate $g_i \lor g_j$. We take the topological sort of the directed acyclic graph given by the circuit $C$, where the vertices correspond to the gates and the literals. We call the ordered vertices of this DAG given by the topological sort $(c_1, \ldots, c_n)$, in order, and we let $c(g)$ denote the $c_i$ that corresponds to the gate or literal $g$. Then, we define a function $f(c_i) = \slfrac{1}{\prod_{1 \leq j < i} d_{c_j}}$, where $d_{c_j}$ denotes the degree of $c_j$. We note that $f(\cdot)$ can be efficiently computed for each $c_i$ by taking a prefix product of the degrees of $c_i$, in order. Finally, given a gate $g_i \lor g_j$ which outputs to $g_k$, we set $w_{ijk} = \min(f(c(g_i)), f(c(g_j)))$.

Note that the construction for every $\wedge$ gate is performed similarly, except we swap the edge weights between $(g_k ,t)$ and $(g_k, f)$. More explicitly, we instead add an edge of weight $(1 + \frac{1}{2} \cdot \varepsilon) w_{ijk}$ between $(g_k, t)$, and an edge of weight $(1 + \varepsilon) w_{ijk}$ between $(g_k, f)$.

We now claim that applying the Louvain method optimizing for the \ccobj{} objective on this graph $G$ solves the circuit $C$. In other words, we claim that each gate $g_i$ will ultimately cluster with either $t$ or $f$, depending on if its value in the circuit is true or false respectively. Recall that the Louvain method involves two main subroutines that iterate until convergence -- a subroutine in which vertices move to their best local clusters, and a subroutine that compresses the graph. We actually prove that the clustering given by performing best local vertex moves until convergence on $G$ produces two final clusters of this nature, one containing $t$ and one containing $f$. Then, in the compression subroutine, we obtain two vertices in the compressed graph corresponding to $t$ and $f$, and because the edge weight between $t$ and $f$ is a large enough negative constant, the two vertices in the compressed graph will not cluster with each other, terminating the algorithm.

We begin by proving a weaker statement, namely that at any given point in time throughout the best local vertex moves process, each gate $g_i$ is clustered into either a) a singleton cluster containing only $g_i$, b) a two-vertex cluster containing $g_i$ and $g_i'$, or c) a cluster containing $g_i$ and either $t$ or $f$ (but not both), depending on $g_i$'s corresponding truth value.

We prove this statement using induction. The base case follows because we begin with singleton clusters. We also note that the literals $x_i$ and their negations will always choose to cluster with their corresponding $t$ or $f$, because of the large enough positive constant weight between the edge from $x_i$ to its corresponding truth value. Similarly, the vertices $t$ and $f$ will always choose to cluster with corresponding $x_i$, due to the large enough positive constant weight. As such, we disregard literals and the vertices $t$ and $f$ in our inductive step. For our inductive step, we assume the inductive hypothesis, and show that the statement holds when we consider the local best move of a vertex $g_k$, originating from a gate $g_i \lor g_j$ (the argument for a gate $g_i \wedge g_j$ follows symmetrically).

\begin{figure}[t!]
\begin{center}
\includegraphics[width=0.45\columnwidth,page=18]{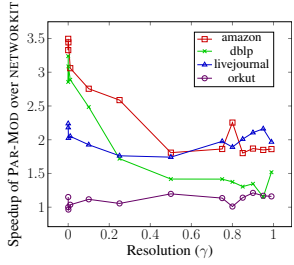}
\caption{Speedup of \textsc{Par-Mod} over \textsc{networkit} on amazon, dblp, livejournal, and orkut, for varying resolutions.}
\label{fig:exp4-nk}
\end{center}
\end{figure}

 \begin{figure}[t!]
\begin{center}
\includegraphics[width=0.7\columnwidth,page=2]{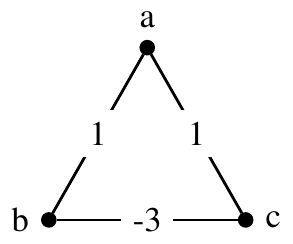}
\caption{The vertices and edges added to the graph $G$, given a gate $g_i \lor g_j$ which outputs to $g_k$.}
\label{fig:pcomplete}
\end{center}
\end{figure}

 Note that $g_k'$ only has a single positive weight edge to $g_k$, so its local best move is always to move to the cluster that $g_k$ is in. If neither $g_i$ nor $g_j$ are clustered with $t$ or $f$ when $g_k$ decides its best move, then $g_k$ will always choose to move to the cluster that $g_k'$ is in. This follows by construction because the edge weight $(g_k, g_k')$ of $(2 + \frac{2}{3} \cdot \varepsilon) w_{ijk}$ exceeds the edge weights from $g_k$ to $g_i$. $g_j$, $t$, and $f$, and because the sum of the weights of the edges from $g_k$ to its out-neighbors must be less than $w_{ijk}$. Moreover, since $g_k'$ only ever decides to cluster with $g_k$, either $g_k$ will remain in its current cluster, which satisfies the inductive step, or $g_k$ will move to cluster with $g_k'$, which must be a singleton cluster. In this case, $g_k$ forms a two-vertex cluster with $g_k'$.

 If at least one of $g_i$ and $g_j$ is clustered with $t$ when $g_k$ decides its best move, WLOG $g_i$, then $g_k$ will always choose to move to cluster with $t$, which by the inductive hypothesis must correspond with $g_k$'s truth value. This is because the sum of the weight of the edges $(g_k, t)$ and $(g_k, g_i)$ is given by $(2 + \varepsilon) w_{ijk}$, which exceeds the weight of the edge $(g_k, g_k')$, or $(2 + \frac{2}{3} \cdot \varepsilon) w_{ijk}$, thus inducing $g_k$ to move if it was originally in either a singleton cluster or a two-vertex cluster with $g_k'$. As before, the sum of the weights of the edges from $g_k$ to its out-neighbors must be less than $w_{ijk}$, so $g_k$ will not consider any other cluster.
 
Similarly, if both of $g_i$ and $g_j$ are clustered with $f$ when $g_k$ decides its best move, then $g_k$ will always choose to move to cluster with $f$, which by the inductive hypothesis must correspond with $g_k$'s truth value. This again follows directly from the weights of the edges from $g_k$ to $g_i$, $g_j$, $t$, $f$, and $g_k'$. Moreover, if exactly one of $g_i$ and $g_j$ is clustered with $f$, and neither are clustered with $t$, then $g_k$ will always choose to move to the cluster that $g_k'$ is in, by virtue of the constructed edge weights. In both of these cases, the inductive step is ultimately satisfied.

Thus, we have shown that at any given point in time throughout the best local vertex moves process, each gate $g_i$ is clustered into either a) a singleton cluster containing only $g_i$, b) a two-vertex cluster containing $g_i$ and $g_i'$, or c) a cluster containing $g_i$ and either $t$ or $f$ (but not both), depending on $g_i$'s corresponding truth value.

To complete the proof, we must now show that the best moves process converges with each vertex clustered with either $t$ or $f$, depending on its corresponding truth value. This follows from a similar argument to our inductive argument above, where for a gate $g_i \lor g_j$ that outputs to $g_k$, if $g_i$ and $g_k$ are both clustered with their corresponding truth value, then $g_k$ will always choose to cluster with its corresponding truth value during its best move operation. Furthermore, the literals $x_i$ and their negations necessarily choose to cluster with their corresponding truth values whenever prompted, by construction of the edge weights between the literals and $t$ and $f$. Thus, it follows that the best moves process converges with each gate $g_i$ clustered with either $t$ or $f$.

This completes the reduction, since we can obtain from the final clusters the solution to the circuit $C$.

\end{proof}

\end{document}